\newtheorem{theorem}{Theorem}
\begin{document}

\title{MIMO Mutli-Cell Processing: Optimal Precoding and Power Allocation}

\author{Samah A. M. Ghanem,~\IEEEmembership{Senior Member,~IEEE}\\
%\IEEEauthorblockA{Instituto de Telecomunicações\\
%Faculdade de Engenharia da Universidade do Porto, Portugal\\
%Email: samah.ghanem@fe.up.pt}
%\thanks{M. Shell is with the Department
%of Electrical and Computer Engineering, Georgia Institute of Technology, Atlanta,
%GA, 30332 USA e-mail: (see http://www.michaelshell.org/contact.html).}% <-this % stops a space
%\thanks{J. Doe and J. Doe are with Anonymous University.}% <-this % stops a space
%\thanks{Manuscript received April 19, 2005; revised January 11, 2007.}}
}
\maketitle

\begin{abstract}
%\boldmath
We investigate the optimal power allocation and optimal precoding for a cluster of two BSs which cooperate to jointly maximize the achievable rate for two users connecting to each BS in a MCP framework. This framework is modeled by a
virtual network MIMO channel due to the framework of full cooperation. In particular, due to sharing the CSI and data between the two BSs over the backhaul link. We provide a generalized fixed point equation of the optimal precoder in the asymptotic regimes of the low- and high-snr. We introduce a new iterative approach that leads to a closed-form expression for the optimal precoding matrix in the high-snr regime which is known to be an NP-hard problem. Two MCP distributed algorithms have been introduced, a power allocation algorithm for the UL, and a precoding algorithm
for the DL.
\end{abstract}

\begin{IEEEkeywords}
Cooperation, MCP, MMSE, Mutual Information, Power Allocation, Precoding.
\end{IEEEkeywords}

\IEEEpeerreviewmaketitle

\section{Introduction}

\IEEEPARstart{M}{ulti-cell} cooperative processing is well acknowledged for significantly improving spectral efficiency and fairness amongst users. Exploiting the concepts of cooperation via transmit diversity and virtualizing the networks can move networks into upper bounds despite the fundamental limits of cooperation \cite{96}. Therefore, cooperation in small size clustered network frameworks from one side can boost network performance. However, from another side, demonstrate that adaptivity and feedback can have a dramatic effect on the data rates when transmitter adapt to the channel experience. In this paper we consider a cooperative framework using mutli-cell processing~\cite{94}\footnote{A short conference version of this work is published in \cite{samahVTC13}.}. Firstly, we consider the problem of joint cooperative optimal power allocation. Therefore, with a prior knowledge of each channel state and the data from each UT, the base stations will cooperate to jointly design the optimal power allocation that maximizes the joint reliable information rates; i.e., the clustered network MIMO capacity. Each BS will then communicate the optimal power via feedback DL links to each user in order to use in their UL transmissions considering that the process is adaptive, and the processing time is very small, such that the CSI doesn't change.  The UL/DL reciprocity/duality is not assumed in solving the problem and there is no special setup like TDD considered, \cite{44}, \cite{95}. Secondly, we consider the problem of joint cooperative precoding for the DL scenario through which both BSs can jointly design the optimal precoding vectors. Besides, we provide insights into the way the interference can be mitigated from a precoding perspective and on the other hand, how a studied interference can be thought of as a positive factor in the MCP framework, instead of dealing with it as a limiting factor to the network capacity.

\vspace{0.5cm}

In this paper, we investigate the optimal power allocation and optimal precoding for a cluster of two BSs which cooperate to jointly maximize the achievable rate for two users connecting to each BS in a MCP framework. This framework is modeled by a virtual network MIMO channel due to the framework of full cooperation. In particular, due to sharing the CSI and data between the two BSs over the backhaul link. We provide a generalized fixed point equation of the optimal precoder in the asymptotic regimes of the low- and high-snr. We introduce a new iterative approach that leads to a closed-form expression for the optimal precoding matrix in the high-snr regime which is known to be an NP-hard problem. Two MCP distributed algorithms have been introduced, a power allocation algorithm for the UL, and a precoding algorithm for the DL. 

\section{The MCP System Model}

Consider the scenario shown in Figure~\ref{fig:figure-1} where MCP is implemented in clusters of two base stations. The base stations no longer tune their physical and link/MAC layer parameters separately (power level, time slots, sub-carrier usage, precoding coefficients etc.), but instead coordinate their coding and decoding operations on the basis of channel state information and user data information exchanged over a backhaul link~\cite{94}. \cite{DBLP:journals/corr/Ghanem14a} and \cite{samahWiMob16} considered a similar MCP scenario but with minimal cooperation, where data is not shared among BS, but only the CSI. The scenario considered here, exploits the MIMO MCP system considering no limits on the backhaul to which data and CSI can be shared among cooperative BSs. 

As illustrated in Figure~\ref{fig:figure-1}, we suppose that the base stations will share their CSI and will exchange the information received by the user terminals, UT1 and UT2 who are roaming under the coverage of BS1 and BS2, respectively. Therefore, BS1 and BS2 will receive from UT1 and UT2 respectively, 
\begin{figure}[ht!]
    \begin{center}
        \mbox{\includegraphics[height=3in,width=3in]{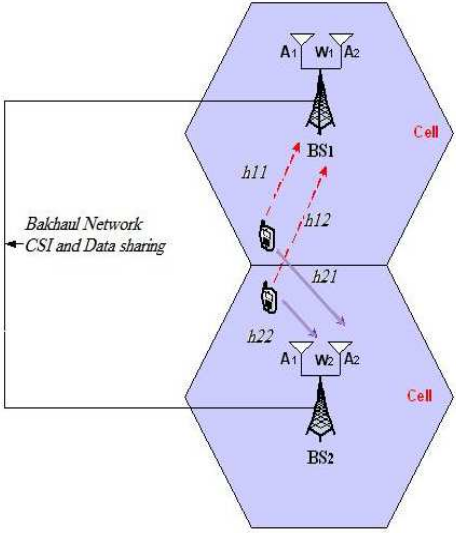}}
    \caption{MCP in a cluster of two base stations. The backhall link with finite bandwidth for sharing CSI and data is illustrated.}
    \label{fig:figure-1}
    \end{center}
    \label{figure-1}
\end{figure}
\begin{equation}
\label{a1}
{{y_{1}}}= \sqrt{snr}{h_{11} \sqrt{P_1} {{x_{1}}}}+ \sqrt{snr}h_{21} \sqrt{P_2} {{x_{2}}}+ {{n_{1}}}
\end{equation}
\begin{equation}
\label{a2}
{y_{2}}= \sqrt{snr} {h_{12} \sqrt{P_1} {{x_{1}}}}+\sqrt{snr} {h_{22} \sqrt{P_2} {{x_{1}}}+{{n_{2}}}}
\end{equation}
\normalsize 

${y_{1}}\in \mathbb{C}^n$ and ${y_{2}}\in \mathbb{C}^n$ represent the received vectors of complex symbols at BS1 and BS2 respectively, ${x_{1}}\in \mathbb{C}^n$ and ${x_{2}}\in \mathbb{C}^n$ represent the vectors of complex transmit symbols with zero mean and identity covariance ${\mathbb{E}[x_{1}x_{1}^{\dag}]},$ ${\mathbb{E}[x_{2}x_{2}^{\dag}]}$, respectively, ${n_{1}}\in \mathbb{C}^n$ and ${n_{2}}\in \mathbb{C}^n$ represent vectors of circularly symmetric complex Gaussian random variables with zero mean and identity covariance. ${h_{ij}}$ represent the complex gains of the sub-channels between transmitter ${i}$ and receiver ${j}$, where the main links are the ones with ${i=j}$, and the interference links are the ones with ${i} \neq {j}$. ${\sqrt{P1}}$ and ${\sqrt{P2}}$ represent the amplitude of the transmitted signals from UT1 and UT2, respectively. And $snr$ is the received signal to noise power ratio. The cooperation between the two base stations is incorporated via using the upper bound of the achievable rates in MIMO channels~\cite{35}, ICs~\cite{99}, and BCs~\cite{103}, as well as the MAC~\cite{99}. The achievable rates are:
\begin{equation}
{R_{1}} \leq I({{x_{1};y_{1}|x_{2}}})
\end{equation}
\begin{equation}
{R_{2}} \leq I({{x_{2};y_{2}|x_{1}}})
\end{equation}
\begin{equation}
{R_{1}+R_{2}}\leq \min~[I({{x_{1},x_{2};y_{1}}}),I({{x_{1},x_{2};y_{2}}})]
\end{equation}
\begin{equation}
\leq I({{x_{1},x_{2};y_{1},y_{2}}})
\end{equation}

Therefore, the optimization will be performed over the joint mutual information subject to the users power constraints, as follows:
\begin{equation}
\label{eq.a8.5}
\max \ I({{x_{1},x_{2};y_{1},y_{2}}})
\end{equation}
%\normalsize        
                       
Subject to:
\begin{equation}
\label{eq.a9.5}   
{P_1}\leq {Q_1}, \ {P_2}\leq {Q_2}, \ {P_1}\geq 0 \ and \ {P_2}\geq 0 
\end{equation}
%\normalsize  
    
Where ${P_1}$ and ${P_2}$ are the transmitted power corresponding to each UT, ${Q_1}$ and ${Q_2}$ is the total and maximum power each UT can use, the channels considered are scalar channels, and precoding is precluded in this UL scenario.

\section{Optimal Power Allocation with MCP}
MCP creates a virtual MIMO, that has its main power as a diagonal matrix, and it channels are the ones for the main channels gains and interference channels gains. Thus, we can represent the power for the two UTs as a diagonal matrix ${\bf{P}}=diag(P_1,P_2)$ and the channels can be represented as a $2 \times 2$ matrix with each row $i$ entries  
$h_{ii}, h_{ij}$. Thus, $vec({\bf{H}})=[h_{11}, h_{12}, h_{21}, h_{22}]$.

\subsubsection{Gaussian Inputs}
For Gaussian inputs, the mutual information is defined as:
\begin{equation}
\label{eq.a11.5}
{I({{x_{1},x_{2};y_{1},y_{2}}})}={log \left|\left(\bf{H}\bf{P}\bf{P}^{\dag}\bf{H}^{\dag}+\bf{I}\right)\right|}
\end{equation}

\vspace{0.2cm}

\begin{theorem}
\label{theorem1.5}
The optimal power allocation for two UTs in the MCP framework ${(P_1^\star,P_2^\star)}$ with Gaussian inputs in terms of channel coefficients and powers follows the following form:
\begin{equation}
\label{eq.a12.5}
\left\{\begin{array}{l l}
{P_1^{\star}=Q_1},\\ 
{P_2^{\star}=Q_2},
\end{array} \right.
\end{equation}
\end {theorem}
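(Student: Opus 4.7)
The plan is to show that the joint mutual information in equation~(\ref{eq.a11.5}) is monotonically non-decreasing in each of $P_1$ and $P_2$ on the feasible region, so that the maximum over the box constraints~(\ref{eq.a9.5}) is attained at the upper corner $(Q_1,Q_2)$. This is essentially a monotonicity-plus-KKT argument on a log-determinant objective, and I would present it in exactly that order.

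First I would rewrite the objective in the form $I(P_1,P_2)=\log\bigl|\mathbf{I}+\mathbf{H}\,\mathbf{D}\,\mathbf{H}^{\dag}\bigr|$, where $\mathbf{D}=\mathrm{diag}(P_1,P_2)$ (absorbing the square roots from equations~(\ref{a1})--(\ref{a2}) into the diagonal entries of $\mathbf{P}\mathbf{P}^{\dag}$), so that $\mathbf{H}\mathbf{D}\mathbf{H}^{\dag}=P_1\,\mathbf{h}_1\mathbf{h}_1^{\dag}+P_2\,\mathbf{h}_2\mathbf{h}_2^{\dag}$ with $\mathbf{h}_i$ the $i$-th column of $\mathbf{H}$. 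Differentiating with Jacobi's formula gives
\begin{equation}
\frac{\partial I}{\partial P_i}=\mathbf{h}_i^{\dag}\bigl(\mathbf{I}+\mathbf{H}\mathbf{D}\mathbf{H}^{\dag}\bigr)^{-1}\mathbf{h}_i\;\geq\;0,
\end{equation}
since $(\mathbf{I}+\mathbf{H}\mathbf{D}\mathbf{H}^{\dag})^{-1}$ is positive definite. This already establishes that $I$ is coordinate-wise non-decreasing in $(P_1,P_2)$ on $[0,Q_1]\times[0,Q_2]$, and strictly increasing whenever $\mathbf{h}_i\neq \mathbf{0}$, which is the non-degenerate case of interest.

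Next I would formalize the conclusion through the KKT conditions for the problem~(\ref{eq.a8.5})--(\ref{eq.a9.5}). The Lagrangian
\begin{equation}
\mathcal{L}=I(P_1,P_2)-\sum_{i=1}^{2}\lambda_i(P_i-Q_i)+\sum_{i=1}^{2}\mu_i P_i,
\end{equation}
with $\lambda_i,\mu_i\geq 0$, yields stationarity $\partial I/\partial P_i=\lambda_i-\mu_i$. Since the partial derivative computed above is strictly positive, we must have $\lambda_i>0$, and by complementary slackness $\lambda_i(P_i-Q_i)=0$ this forces $P_i^{\star}=Q_i$ for $i=1,2$. Because $\log\bigl|\mathbf{I}+\mathbf{H}\mathbf{D}\mathbf{H}^{\dag}\bigr|$ is concave in $\mathbf{D}\succeq 0$ and the feasible set is convex, the KKT point is the global maximizer, giving exactly the claimed $(P_1^{\star},P_2^{\star})=(Q_1,Q_2)$.

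The only genuine obstacle I anticipate is the notational one already flagged above: as written, $\mathbf{P}=\mathrm{diag}(P_1,P_2)$ in equation~(\ref{eq.a11.5}) and $\mathbf{P}\mathbf{P}^{\dag}=\mathrm{diag}(P_1^{2},P_2^{2})$ would make the ``power'' in the log-det quadratic in $P_i$; I would resolve this by stating explicitly that $P_i$ denotes the transmit power (so $\sqrt{P_i}$ is the amplitude appearing in~(\ref{a1})--(\ref{a2})) and that $\mathbf{H}\mathbf{P}\mathbf{P}^{\dag}\mathbf{H}^{\dag}$ should be read as $\mathbf{H}\,\mathrm{diag}(P_1,P_2)\,\mathbf{H}^{\dag}$. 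Once that convention is fixed, the rest of the argument is essentially a one-line monotonicity fact about the log-determinant of a positive-definite matrix plus the box-constrained KKT conditions, and no further technical machinery is needed.
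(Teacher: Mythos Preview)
Your proposal is correct and follows essentially the same route as the paper: the authors' proof also invokes the KKT conditions of~(\ref{eq.a8.5})--(\ref{eq.a9.5}) together with the observation that the log-determinant objective is increasing in each power because the matrix inside the log is positive definite. Your version is simply more explicit (the Jacobi-formula derivative and the complementary-slackness step are spelled out), and your clarification of the $P_i$ versus $\sqrt{P_i}$ convention is a useful addition that the paper leaves implicit.
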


\begin{proof}
Theorem~\ref{theorem1.5} follows the solution of the KKT conditions of~\eqref{eq.a8.5} subject to~\eqref{eq.a9.5}, and due to the fact that the function is increasing with respect to the power -the matrix within the log is a positive definite matrix. Thus, we notice that the solution of the derivative with respect to ${P_1}$ leads to ${P_2^{\star}}$, and the derivative of ${P_2}$ leads to ${P_1^{\star}}$.
\end{proof}

\vspace{0.3cm}

It can be easily verified that~\eqref{eq.a11.5} is concave with respect to each user main power since the second derivative is always negative, and also through the positive definiteness of the matrix in~\eqref{eq.a11.5}. Capitalizing on the relation between the gradient of the mutual information and the MMSE in~\cite{35}, we re-investigate the result in the context of the MCP cooperative framework. The relation between the gradient of the mutual information in~\eqref{eq.a8.5} and the MMSE is as follows:
\begin{equation}
\label{mimo}
\nabla_{P} I({x_{1},x_{2};y_{1},y_{2}})=\bf{H}^{\dag}\bf{H}\bf{P}\bf{E}
\end{equation}
\begin{equation}
=\left[\begin{matrix} \nabla_{P_{11}}I({x_{1},x_{2};y_{1},y_{2}}) & \nabla_{P_{12}}I({x_{1},x_{2};y_{1},y_{2}}) \\ \nabla_{P_{21}}I({x_{1},x_{2};y_{1},y_{2}}) & \nabla_{P_{22}}I({x_{1},x_{2};y_{1},y_{2}}) \end{matrix}\right],
\end{equation}

where:
\begin{multline}
\nabla_{P11}I(x_{1},x_{2};y_{1},y_{2})={h_{11}}^{*}h_{11} \sqrt{P_1}E_{11}+{h_{21}}^{*}h_{21} \sqrt{P_1}E_{11}\\
+{h_{11}}^{*}h_{12} \sqrt{P_2}E_{21}+{h_{21}}^{*}h_{22} \sqrt{P_2}E_{21}   
\end{multline} 
\begin{multline}
\nabla_{P12}I(x_{1},x_{2};y_{1},y_{2})={h_{11}}^{*}h_{11} \sqrt{P_1}E_{12}+{h_{21}}^{*}h_{21} \sqrt{P_1}E_{12}\\
+{h_{11}}^{*}h_{12} \sqrt{P_2}E_{22}+{h_{21}}^{*}h_{22} \sqrt{P_2}E_{22}   
\end{multline}
\begin{multline}
\nabla_{P21}I(x_{1},x_{2};y_{1},y_{2})={h_{12}}^{*}h_{11} \sqrt{P_1}E_{11}+{h_{22}}^{*}h_{21} \sqrt{P_1}E_{11}\\
+{h_{12}}^{*}h_{12} \sqrt{P_2}E_{21}+{h_{22}}^{*}h_{22} \sqrt{P_2}E_{21}   
\end{multline}
% 
%\vspace{-0.6cm} 
%%\renewcommand{\theequation}{5.16}
\begin{multline}
\nabla_{P22}I(x_{1},x_{2};y_{1},y_{2})={h_{12}}^{*}h_{11} \sqrt{P_1}E_{12}+{h_{22}}^{*}h_{21} \sqrt{P_1}E_{12}\\
+{h_{12}}^{*}h_{12} \sqrt{P_2}E_{22}+{h_{22}}^{*}h_{22} \sqrt{P_2}E_{22} .  
\end{multline}

The MMSE matrix $\bf{E}$ defines the elements of the gradient of the mutual information with respect to the main links and interference links powers, as follows:
\begin{equation}
\bf{E}=\bf{\left[\begin{matrix} E_{11} & E_{12} \\ E_{21} & E_{22} \end{matrix}\right]},
\end{equation}

with the expansion of $\bf{E}$ is given by:
\begin{equation}
E_{11}=\mathbb{E}[(x_{1}-\mathbb{E}(x_{1}|y_{1},y_{2}))(x_{1}-\mathbb{E}(x_{1}|y_{1},y_{2}))^{\dag}]
\end{equation}
%\vspace{-0.8cm} 
%%\renewcommand{\theequation}{5.19}
\begin{equation}
E_{12}=\mathbb{E}[(x_{1}-\mathbb{E}(x_{1}|y_{1},y_{2}))(x_{2}-\mathbb{E}(x_{2}|y_{1},y_{2}))^{\dag}]
\end{equation}
%\vspace{-0.6cm} 
%%\renewcommand{\theequation}{5.20}
\begin{equation}
E_{21}=\mathbb{E}[(x_{2}-\mathbb{E}(x_{2}|y_{1},y_{2}))(x_{1}-\mathbb{E}(x_{1}|y_{1},y_{2}))^{\dag}]
\end{equation}
%\vspace{-0.6cm} 
%%\renewcommand{\theequation}{5.21}
\begin{equation}
E_{22}=\mathbb{E}[(x_{2}-\mathbb{E}(x_{2}|y_{1},y_{2}))(x_{2}-\mathbb{E}(x_{2}|y_{1},y_{2}))^{\dag}]
\end{equation}

$E_{11}$ and $E_{22}$ correspond to MMSE1 and MMSE2, respectively; that is the per-user MMSE which defines the error in each main link, and their sum is the total error. However, $E_{12}$ and $E_{21}$ are covariance functions of the estimates of the decoded symbols for each UT. Note that the non-linear estimates of each user input is given as follows:
\begin{multline}
\widehat{x}_1={{\mathbb{E}[x_{1}|y_1,y_2]}}=\\
\sum_{{x_{1},x_{2}}} \frac{{{x_{1}}}p_{y_1,y_2|x_{1},x_{2}}({{y_1,y_2|x_{1},x_{2}}})p_{x_1}({{x_{1}}})p_{x_2}({{x_{2}}})}{p_{y}({{y_1,y_2}})}
\end{multline}
\begin{multline}
\widehat{x}_2={\mathbb{E}[x_{2}|y_1,y_2]}=\\
\sum_{{{x_{1},x_{2}}}} \frac{{{x_{2}}}p_{y_1,y_2|x_{1},x_{2}}({{y_1,y_2|x_{1},x_{2}}})p_{x_1}({{x_{1}})p_{x_2}({x_{2}}})}{p_{y_1,y_2}({{y_1,y_2}})}.
\end{multline}

The non-linear estimates give a statistical intuition to the problem, however, in practical setups, such estimates can be found via the linear MMSE. In particular, the inputs estimates can be found by deriving the optimal Wiener receive filters solving a minimization optimization problem of the MMSE, the following theorem provides the linear estimates.

\begin{theorem}
The linear estimates of the inputs of the vectors ${\bf{x}}=[{x_1},{x_2}]$ for each user in the virtual MIMO MCP framework given the output vector ${\bf{y}}=[{{y_1},{y_2}}]$ can be expressed as:
\begin{equation}
{\bf{\widehat{x}}}=\bf{P^{\dag}H^{\dag}(I+P^{\dag}H^{\dag}PH)^{-1}y}
\end{equation}
\end{theorem}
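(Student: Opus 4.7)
The plan is to derive the stated expression by a standard Wiener-filter (orthogonality-principle) argument applied to the virtual MIMO model arising from~\eqref{a1}--\eqref{a2}. Stacking the two received vectors into $\mathbf{y}=[\mathbf{y}_1^{\top},\mathbf{y}_2^{\top}]^{\top}$ and the two transmitted vectors into $\mathbf{x}=[\mathbf{x}_1^{\top},\mathbf{x}_2^{\top}]^{\top}$, and absorbing the scalar $\sqrt{snr}$ into $\mathbf{H}$ for notational economy, the full input/output relation becomes $\mathbf{y}=\mathbf{H}\mathbf{P}\mathbf{x}+\mathbf{n}$, where by assumption $\mathbf{x}$ and $\mathbf{n}$ are zero mean with identity covariance and mutually independent, and $\mathbf{P}$ is the (diagonal) amplitude matrix already defined in Section~III.

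Next I would parametrise the linear estimator as $\widehat{\mathbf{x}}=\mathbf{W}\mathbf{y}$ and choose $\mathbf{W}$ to minimise $\mathbb{E}\,\|\mathbf{x}-\mathbf{W}\mathbf{y}\|^{2}$. The orthogonality condition $\mathbb{E}[(\mathbf{x}-\mathbf{W}\mathbf{y})\mathbf{y}^{\dagger}]=\mathbf{0}$ immediately produces the familiar $\mathbf{W}=\mathbf{R}_{xy}\mathbf{R}_{yy}^{-1}$. Direct evaluation of the two second-order moments from the signal model gives $\mathbf{R}_{xy}=\mathbb{E}[\mathbf{x}\mathbf{y}^{\dagger}]=\mathbf{P}^{\dagger}\mathbf{H}^{\dagger}$ and $\mathbf{R}_{yy}=\mathbb{E}[\mathbf{y}\mathbf{y}^{\dagger}]=\mathbf{H}\mathbf{P}\mathbf{P}^{\dagger}\mathbf{H}^{\dagger}+\mathbf{I}$, so that $\widehat{\mathbf{x}}=\mathbf{P}^{\dagger}\mathbf{H}^{\dagger}(\mathbf{I}+\mathbf{H}\mathbf{P}\mathbf{P}^{\dagger}\mathbf{H}^{\dagger})^{-1}\mathbf{y}$.

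To cast the estimator in the exact form announced in the theorem I would then apply the push-through (matrix-inversion) identity $\mathbf{A}^{\dagger}(\mathbf{I}+\mathbf{A}\mathbf{A}^{\dagger})^{-1}=(\mathbf{I}+\mathbf{A}^{\dagger}\mathbf{A})^{-1}\mathbf{A}^{\dagger}$ with $\mathbf{A}=\mathbf{H}\mathbf{P}$, which rearranges the factors into the $\mathbf{P}^{\dagger}\mathbf{H}^{\dagger}(\mathbf{I}+\cdots)^{-1}\mathbf{y}$ shape written in the statement. The derivation is essentially mechanical because the model is linear-Gaussian and the claim is the textbook Wiener filter specialised to this virtual MIMO channel; the only real pitfall is notational, namely keeping the Hermitian-transpose ordering of $\mathbf{P}$ and $\mathbf{H}$ consistent so that the Gram matrix inside the inverse is Hermitian positive definite and the push-through identity applies cleanly. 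I therefore do not anticipate any structural obstacle beyond the careful book-keeping of these transposes.
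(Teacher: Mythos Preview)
Your proposal is correct and is precisely the approach the paper has in mind: the paper's own proof is a one-line sketch (``follows from the derivative of the linear MMSE to derive the Wiener filter part as the MMSE minimizer''), and your orthogonality-principle computation of $\mathbf{R}_{xy}\mathbf{R}_{yy}^{-1}$ together with the push-through identity supplies exactly those details. Your caveat about transpose ordering is also well taken, since the expression as printed in the statement has $\mathbf{P}^{\dagger}\mathbf{H}^{\dagger}\mathbf{P}\mathbf{H}$ inside the inverse, which only matches the Gram matrix $\mathbf{P}^{\dagger}\mathbf{H}^{\dagger}\mathbf{H}\mathbf{P}$ (or equivalently $\mathbf{H}\mathbf{P}\mathbf{P}^{\dagger}\mathbf{H}^{\dagger}$ before push-through) because $\mathbf{P}$ here is diagonal.
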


\begin{proof}
The proof of this theorem follows from the derivative of the linear MMSE to derive the Wiener filter part as the MMSE minimizer.
\end{proof}

\vspace{0.3cm}

\subsubsection{Arbitrary Inputs}
There are no closed-form expressions for the mutual information with arbitrary inputs; therefore, we need to capitalize on the relation between the gradient of the mutual information and the MMSE to derive the optimal power allocation for the generalized inputs.

\begin{theorem}
\label{theorem2.5}
The optimal power allocation for two UTs in the MCP framework ${(P_1^\star,P_2^\star)}$ with arbitrary inputs -in terms of channel coefficients, and the MMSE- takes the following form:
\begin{multline}
\label{13.5}
{\lambda_1}^{\star}\sqrt{P_1}=({h_{11}}^{*}h_{11}+{h_{21}}^{*}h_{21})\sqrt{P_1}E_{11} \\
+({h_{11}}^{*}h_{12}+{h_{21}}^{*}h_{22})\sqrt{P_2}E_{21}  
\end{multline}
\begin{multline}
\label{14.5}
{\lambda_2}^{\star}\sqrt{P_2}=({h_{12}}^{*}h_{11}+{h_{22}}^{*}h_{21})\sqrt{P_1}E_{12} \\
+({h_{12}}^{*}h_{12}+{h_{22}}^{*}h_{22})\sqrt{P_2}E_{22}\end{multline}
\end{theorem}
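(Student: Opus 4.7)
The plan is to derive the two claimed equations as the interior KKT stationarity conditions of the constrained program \eqref{eq.a8.5}--\eqref{eq.a9.5}, using the I--MMSE gradient identity $\nabla_{\mathbf{P}} I = \mathbf{H}^{\dagger}\mathbf{H}\mathbf{P}\mathbf{E}$ already recorded in \eqref{mimo} and the componentwise expansions given immediately after it.

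First I would write the Lagrangian
\[
\mathcal{L} \;=\; I(x_1,x_2;y_1,y_2) \;-\; \sum_{i=1}^{2}\lambda_i(P_i-Q_i) \;+\; \sum_{i=1}^{2}\mu_i P_i,
\]
with nonnegative multipliers $\lambda_i,\mu_i$ attached to the box constraints $0\le P_i\le Q_i$. Regarding the diagonal matrix $\mathbf{P}=\mathrm{diag}(\sqrt{P_1},\sqrt{P_2})$ as the decision variable parameterized by the amplitudes $\sqrt{P_i}$, the $(i,i)$ entries of the matrix-valued gradient $\mathbf{H}^{\dagger}\mathbf{H}\mathbf{P}\mathbf{E}$ are precisely the scalar derivatives $\nabla_{P_{11}} I$ and $\nabla_{P_{22}} I$ that the excerpt has already expanded in closed form in terms of $h_{ij}$, $\sqrt{P_j}$, and the MMSE entries $E_{k\ell}$.

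Next I would impose stationarity $\partial\mathcal{L}/\partial\sqrt{P_i}=0$. For the interior case, complementary slackness forces $\mu_i=0$, so after collecting the chain-rule factor between differentiation in $P_i$ and in $\sqrt{P_i}$ into the multiplier, one obtains $\nabla_{P_{ii}} I = \lambda_i^{\star}\sqrt{P_i}$ for $i=1,2$. Substituting the explicit expansions of $\nabla_{P_{11}} I$ and $\nabla_{P_{22}} I$ given in the excerpt and grouping the two terms that multiply $\sqrt{P_1}E_{11}$ and $\sqrt{P_2}E_{21}$ (respectively $\sqrt{P_1}E_{12}$ and $\sqrt{P_2}E_{22}$) yields exactly \eqref{13.5} and \eqref{14.5}. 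The remaining KKT relations $\lambda_i^{\star}(Q_i - P_i^{\star})=0$ and $\lambda_i^{\star},P_i^{\star}\ge 0$ select the active constraints but do not alter the stationarity form itself.

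The main obstacle is that, unlike the Gaussian case of Theorem~\ref{theorem1.5}, the MMSE matrix $\mathbf{E}$ depends on $(P_1,P_2)$ through the nonlinear conditional means $\widehat{x}_i=\mathbb{E}[x_i\mid y_1,y_2]$, so \eqref{13.5}--\eqref{14.5} are implicit coupled fixed-point equations rather than a closed-form allocation. I would therefore emphasise that the result is a necessary characterization of any interior maximizer, and invoke concavity of $I$ in $\mathbf{P}\mathbf{P}^{\dagger}$ (a consequence of the I--MMSE identity together with the positive semidefiniteness of $\mathbf{E}$) to conclude that the stationary point so characterized is in fact the global optimum of the original program.
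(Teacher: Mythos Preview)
Your proposal is correct and follows essentially the same route as the paper: form the Lagrangian for \eqref{eq.a8.5}--\eqref{eq.a9.5}, invoke the I--MMSE gradient identity $\nabla_{\mathbf P}I=\mathbf H^{\dagger}\mathbf H\mathbf P\mathbf E$, and read off the $(1,1)$ and $(2,2)$ diagonal entries of the stationarity condition to obtain \eqref{13.5}--\eqref{14.5}. The paper's proof additionally spells out the three KKT cases $P_1=0$, $P_2=0$, and $P_1,P_2>0$ explicitly (recovering, e.g., $P_2^\star=Q_2$ on the boundary), whereas you fold these into a single complementary-slackness remark; both are adequate for the statement as written.
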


\begin{proof}
See Appendix A.
\end{proof}

\vspace{0.3cm}

Theorem~\ref{theorem2.5} can be solved numerically to search the optimal power allocation of both users, where ${\lambda_1}$ and ${\lambda_2}$ are the Lagrange multipliers, it assimilates a mercury/waterfilling for the arbitrary inputs that compensate for the non-Gaussianess of the binary constellations, and a waterfilling for the Gaussian inputs, where more power is allotted to less noisy channels. Moreover, when both user powers are non-zero, we can re-write Theorem~\ref{theorem2.5} with respect to the MMSE and the covariance as follows:
%%\renewcommand{\theequation}{5.27}
%\small
\begin{multline}
\label{15.5}
P_1^{\star}= \\
\frac{1}{snr1|h_{11}^{*}h_{11}+ h_{21}^{*}h_{21}|}~mmse~(snr1|h_{11}^{*}h_{11}+h_{21}^{*}h_{21}|P_1^{\star})\\+
\frac{1}{snr2|h_{11}^{*}h_{12}+ h_{21}^{*}h_{22}|}~cov~(snr2|h_{11}^{*}h_{12}+ h_{21}^{*}h_{22}|P_2^{\star}),
\end{multline}
%\normalsize
and,
\begin{multline}
\label{16.5}
P_2^{\star}= \\
\frac{1}{snr2|h_{12}^{*}h_{12}+h_{22}^{*}h_{22}|}~mmse~(snr2|h_{12}^{*}h_{12}+ h_{22}^{*}h_{22}|P_2^{\star})\\+
\frac{1}{snr1|h_{12}^{*}h_{11}+h_{22}^{*}h_{21}|}~cov~(snr1|h_{12}^{*}h_{11}+h_{22}^{*}h_{21}|P_1^{\star}).
\end{multline}
%\normalsize

Its straightforward to see that for the case when the inputs are time-division multiplexed, the optimal power allocation takes the form: ${P_2^\star=Q_2}$ when ${P_1=0}$, and ${P_1^\star=Q_1}$ when ${P_2=0}$. In addition, we can easily specialize the result of \eqref{13.5} and \eqref{14.5} to the one in \eqref{eq.a12.5} for Gaussian inputs. In particular, we substitute the linear MMSE for Gaussian inputs given by:
\begin{equation}
{\bf{E}}=({\bf{P}^{\dag}\bf{H}^{\dag}\bf{H}\bf{P}}+{\bf{I}})^{-1}
\end{equation}

into \eqref{13.5} and \eqref{14.5}, it can be easily shown that the optimal power allocation in Theorem~\ref{theorem2.5} matches the one in Theorem~\ref{theorem1.5}. However, it is worth to notice other solutions that can be derived from \eqref{15.5} and \eqref{16.5} for other setups, like the two-user MAC-channel, see \cite{98} and \cite{DBLP:journals/corr/Ghanem14}. 

\section{Optimal Precoding with MCP}
We consider the MCP cooperation in the DL where both BSs jointly cooperate to design the optimal precoding vectors that maximize their achievable rates. The optimization problem stays the same, this choice is convenient due to the fact that the joint mutual information upper bounds the broadcast framework, see~\cite{103}. The following theorem gives a  generalization of the optimal precoder structure at the low- and high-snr regime. In particular, we will show that the precoder should admit a structure that performs matching of the strongest source modes to the weakest noise modes, and this alignment enforces a permutation process to appear in the power allocation.     

\begin{theorem}
\label{theorem3.5}
The non-unique first-order optimal precoder that maximizes the mutual information with the MCP that substitutes a MIMO setup subject to an average power constraint can be written as follows:
\begin{equation}
\label{a18} 
\bf{P}^{\star}=\bf{U}\bf{D}\bf{R}^{\dag}  
\end{equation}
\end{theorem}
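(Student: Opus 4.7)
The plan is to obtain the claimed factorization by writing down the first-order Karush--Kuhn--Tucker condition for the precoding problem and then diagonalizing it jointly with respect to the channel Gram and the MMSE matrix evaluated at the optimum.

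First I would form the Lagrangian $\mathcal{L}(\mathbf{P},\mu) = I(\mathbf{x};\mathbf{y}) - \mu\bigl(\operatorname{tr}(\mathbf{P}\mathbf{P}^\dagger) - P_{\max}\bigr)$ and, invoking the gradient--MMSE identity $\nabla_{\mathbf{P}} I = \mathbf{H}^\dagger \mathbf{H}\, \mathbf{P}\, \mathbf{E}$ recorded in~\eqref{mimo}, set the matrix gradient to zero to arrive at the fixed-point stationarity condition $\mathbf{H}^\dagger \mathbf{H}\, \mathbf{P}^\star\, \mathbf{E}^\star = \mu\, \mathbf{P}^\star$, where $\mathbf{E}^\star$ denotes the MMSE matrix evaluated at $\mathbf{P}^\star$. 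Complementary slackness forces $\mu\geq 0$ and activates the trace constraint whenever any nontrivial allocation is used.

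Next I would introduce the eigendecompositions $\mathbf{H}^\dagger \mathbf{H} = \mathbf{U}\boldsymbol{\Lambda}\mathbf{U}^\dagger$ and $\mathbf{E}^\star = \mathbf{R}\boldsymbol{\Gamma}\mathbf{R}^\dagger$ and test the ansatz $\mathbf{P}^\star = \mathbf{U}\mathbf{D}\mathbf{R}^\dagger$ with $\mathbf{D}$ a real non-negative diagonal matrix. Substitution into the stationarity condition and cancellation of the unitary factors reduces the matrix equation to the scalar system $\lambda_i\, d_i\, \gamma_i = \mu\, d_i$ for each mode $i$. Every diagonal entry of $\mathbf{D}$ either vanishes or lies on the level surface $\lambda_i \gamma_i = \mu$, which is precisely the mercury/waterfilling condition alluded to in Theorem~\ref{theorem2.5}. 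This establishes the factorization: $\mathbf{U}$ supplies the eigenbasis of the channel Gram (matching the strongest source modes), $\mathbf{R}$ supplies the eigenbasis of the MMSE matrix (matching the weakest noise modes), and $\mathbf{D}$ delivers the non-uniform power allocation across the matched modes, making transparent the permutation role described in the theorem.

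To close the circle I would specialize the result to the two asymptotic regimes. In the low-snr limit one has $\mathbf{E}^\star\to \mathbf{I}$, so $\mathbf{R}$ is unconstrained and $\mathbf{D}$ concentrates all power on the eigenmode of largest $\lambda_i$, recovering beamforming along the dominant right singular direction of $\mathbf{H}$. In the high-snr regime with Gaussian inputs, $\mathbf{E}^\star = ((\mathbf{P}^\star)^\dagger\mathbf{H}^\dagger \mathbf{H}\mathbf{P}^\star + \mathbf{I})^{-1}$ commutes with $(\mathbf{P}^\star)^\dagger\mathbf{P}^\star$, which forces $\mathbf{R}$ to coincide with $\mathbf{U}$ up to a permutation and reproduces the classical SVD-plus-waterfilling solution. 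The hard part of the argument will be the self-consistency coupling: because $\mathbf{E}^\star$ depends on $\mathbf{P}^\star$, the ansatz must be shown to be a true fixed point of the non-linear map rather than merely a solution of a linearized condition; the non-uniqueness emphasized in the theorem is then a by-product of the permutation freedom in pairing the eigenvalues of $\boldsymbol{\Lambda}$ with those of $\boldsymbol{\Gamma}$ and of the orthogonal ambiguity in $\mathbf{R}$ whenever $\mathbf{E}^\star$ has repeated spectrum, so any uniqueness claim would only hold modulo these symmetries.
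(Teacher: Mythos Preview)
Your proposal is correct and follows essentially the same route as the paper: both derive the KKT fixed-point equation $\mathbf{H}^{\dagger}\mathbf{H}\,\mathbf{P}^{\star}\mathbf{E}^{\star}=\mu\,\mathbf{P}^{\star}$ from the gradient--MMSE identity, diagonalize $\mathbf{H}^{\dagger}\mathbf{H}$ and $\mathbf{E}^{\star}$, and read off $\mathbf{U}=\mathbf{V}_{H}$, $\mathbf{D}$ as the power-allocation diagonal, and $\mathbf{R}$ from the eigenbasis of $\mathbf{E}^{\star}$. The only notable wrinkle is that the paper additionally invokes the Wielandt--Hoffman theorem on the difference of covariances $\mathbb{E}[C-\hat{C}]$ to make the permutation $\Pi$ in $\mathbf{R}=\Pi\,\mathbf{U}_{E}$ explicit, whereas you obtain the permutation freedom a posteriori from the non-uniqueness in pairing the spectra of $\boldsymbol{\Lambda}$ and $\boldsymbol{\Gamma}$; the self-consistency issue you flag is left equally open in the paper.
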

Where $\bf{U}$ is a unitary matrix, $\bf{D}$ is a diagonal matrix, and $\bf{R}$ is a rotation matrix.
\begin{proof}
Theorem~\ref{theorem3.5} follows the relation between the gradient of the mutual information and the MMSE and the decomposition of its matrix components, see Appendix B.
\end{proof}

\vspace{0.3cm}

\section{The Asymptotic Regimes}
The following sections will specialize the study of the MCP framework to key asymptotic regimes of the SNR, particularly, the low-snr and the high-snr. One of the interesting observations that follows an in depth analysis of both regimes is that the optimal designs for both the low- and high-snr performs a diagonalization operation to at least one of the system elements that is causing correlation among different system variables; whether the correlation is among the sub-channels and so diagonalizing the channel matrix or among the inputs and so diagonalizing the error matrix. However, the optimal precoder is not necessarily diagonal. In fact, for Gaussian inputs, the optimal precoder is a diagonal matrix. However, the optimal precoder for arbitrary inputs is a non-diagonal matrix.

\subsubsection{The Low-SNR Regime}
Consider the analysis of the optimal power allocation and optimal precoding with MCP capitalizing on the low-snr expansions of the conditional probability distribution of the Gaussian noise defined as:
\begin{equation}
\label{a19}
p_{y_1,y_2|x_1,x_2}({{y_1,y_2|x_1,x_2}})=\frac{1}{\pi^{n}} e^{-\left\|\left[\begin{matrix} y_1 \\ y_2 \end{matrix}\right]-\bf{H}\bf{P}\left[\begin{matrix} x_1 \\ x_2 \end{matrix}\right]\right\|^{2}}
\end{equation}
%\normalsize

And the MMSE defined as:
\begin{equation}
\label{eq.a20.5}
{MMSE(snr)=\mathbb{E}\left[\left\|\bf{HP}\left(\left[\begin{matrix} x_1 \\ x_2 \end{matrix}\right]-\mathbb{E}\left[\begin{matrix} x_1|y_1,y_2 \\ x_2|y_1,y_2 \end{matrix}\right]\right)\right\|^{2}\right]}
\end{equation}
%\normalsize

The low-snr expansion of the MMSE matrix can be expressed as follows:
\begin{equation}
\label{a21}
{\bf{E}}={\bf{I}}-({\bf{HP}})^{\dag}{\bf{HP}}.snr+\mathcal{O}(snr^{2}) 
\end{equation}
%\normalsize

Consequently, the low-snr expansion of the non-linear MMSE is given by the following theorem.
\begin{theorem}
\label{theorem4.5}
The low-snr expansion of the non-linear MMSE in ~\eqref{eq.a20.5} as $snr \to 0$ is given by:
\begin{multline}
\label{a22}
MMSE(snr)=Tr\left\{\bf{HP}{(\bf{HP})}^{\dag}\right\}\\
-Tr\left\{(\bf{HP}({\bf{HP}}^{\dag}))^{2}\right\}.snr+\mathcal{O}(snr^{2})     
\end{multline}
%\normalsize
\end{theorem}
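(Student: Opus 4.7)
The plan is to reduce the scalar quantity $MMSE(snr)$ to a trace involving the matrix $\bf{E}$ already defined in the paper, and then substitute the low-snr expansion of $\bf{E}$ given in equation~\eqref{a21}. The starting point is the definition in~\eqref{eq.a20.5}, which expresses $MMSE(snr)$ as the expectation of a squared norm $\|{\bf{HP}}(x-\hat{x})\|^{2}$ where $x=[x_1;x_2]^{\dag}$ and $\hat{x}=\mathbb{E}[x\,|\,y_1,y_2]$. Expanding the squared norm as an inner product and pulling the deterministic matrix out of the expectation gives $MMSE(snr)=\mathrm{Tr}\bigl\{({\bf{HP}})^{\dag}({\bf{HP}})\,\mathbb{E}[(x-\hat{x})(x-\hat{x})^{\dag}]\bigr\}$, and the inner expectation is precisely the MMSE matrix $\bf{E}$ defined via the $E_{ij}$ blocks earlier in the paper.

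Once I have $MMSE(snr)=\mathrm{Tr}\{({\bf{HP}})^{\dag}({\bf{HP}})\,{\bf{E}}\}$, I would apply the cyclic property of the trace to rewrite this as $\mathrm{Tr}\{{\bf{HP}}\,{\bf{E}}\,({\bf{HP}})^{\dag}\}$, which is the form that matches the leading term of the claimed expansion. Substituting the low-snr expansion ${\bf{E}}={\bf{I}}-({\bf{HP}})^{\dag}({\bf{HP}})\cdot snr+\mathcal{O}(snr^{2})$ from~\eqref{a21}, the zeroth-order term immediately yields $\mathrm{Tr}\{{\bf{HP}}({\bf{HP}})^{\dag}\}$. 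The first-order term yields $-\mathrm{Tr}\{({\bf{HP}})^{\dag}({\bf{HP}})({\bf{HP}})^{\dag}({\bf{HP}})\}\cdot snr$, and one more application of cyclicity rewrites this as $-\mathrm{Tr}\{({\bf{HP}}({\bf{HP}})^{\dag})^{2}\}\cdot snr$, matching the stated theorem. The $\mathcal{O}(snr^{2})$ residual is preserved because ${\bf{HP}}({\bf{HP}})^{\dag}$ is a fixed matrix whose trace norm is finite.

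The main obstacle, and the only step that really requires care rather than bookkeeping, is justifying the low-snr expansion of $\bf{E}$ itself in~\eqref{a21} for the arbitrary-input case. For Gaussian inputs $\bf{E}=({\bf{I}}+snr\,({\bf{HP}})^{\dag}({\bf{HP}}))^{-1}$ and the Neumann series gives~\eqref{a21} immediately; for arbitrary inputs one needs a Taylor expansion of the conditional mean $\hat{x}$ around $snr=0$. I would handle this by expanding $\hat{x}=\mathbb{E}[x\,|\,y_1,y_2]$ using the Gaussian likelihood in~\eqref{a19}, writing the posterior as proportional to $p_{x}(x)\exp\bigl(-\|y-\sqrt{snr}\,{\bf{HP}}x\|^{2}\bigr)$, expanding the exponential to first order in $snr$, and using the zero-mean, identity-covariance properties of $x$ to verify that the cross-covariance $\mathbb{E}[(x-\hat{x})(x-\hat{x})^{\dag}]$ coincides with ${\bf{I}}-snr\,({\bf{HP}})^{\dag}({\bf{HP}})$ up to $\mathcal{O}(snr^{2})$.

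Everything beyond this expansion is a mechanical use of linearity of expectation, the cyclic property of trace, and substitution, so the body of the proof should be short. I would present it in the order: (i) rewrite $MMSE(snr)$ as a trace against $\bf{E}$; (ii) invoke~\eqref{a21}; (iii) collect terms using cyclicity to reach the stated form.
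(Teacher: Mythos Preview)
Your proposal is correct and follows essentially the same route as the paper's own proof: the paper likewise derives the low-snr expansion of $\bf{E}$ by Taylor-expanding the Gaussian likelihood $p_{y|x}$ to first order in $\sqrt{snr}$, uses this to compute $\mathbb{E}_{y}[\mathbb{E}[x|y]\mathbb{E}[x|y]^{\dag}]=({\bf HP})^{\dag}({\bf HP})\,snr+\mathcal{O}(snr^{2})$ and hence~\eqref{a21}, and then substitutes into $MMSE(snr)=\mathrm{Tr}\{{\bf HP}\,{\bf E}\,({\bf HP})^{\dag}\}$ exactly as you describe. The only cosmetic difference is that the paper carries out the likelihood expansion explicitly as the first step of the proof rather than citing~\eqref{a21} as a prior fact, but your sketch of how to justify~\eqref{a21} matches that derivation.
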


\begin{proof}
See Appendix C.
\end{proof}

\vspace{0.3cm}

Now, by virtue of the relationship between mutual information and MMSE, the Taylor low-snr expansion of the mutual information is given by:
%%\renewcommand{\theequation}{5.35}
%\footnotesize 
\begin{multline}
\label{a23}
I(snr)=Tr \left\{\bf{HP}(\bf{HP})^{\dag}\right\}.snr \\
-Tr\left\{({\bf{HP}(\bf{HP}^{\dag}}))^{2}\right\}.\frac{snr^{2}}{2}+\mathcal{O}(snr^{3})      
\end{multline}
%\normalsize

According to~\eqref{a23}, for first-order optimality, the form of the optimal precoder -in the DL- follows from the low-snr expansions the form of optimal precoder for the complex Gaussian inputs settings. To prove this claim, we will re-define our optimization problem as follows:
\begin{equation}
\label{eq.a24.5}
max \ Tr\left\{\bf{HP}(\bf{HP})^{\dag}\right\}.snr      
\end{equation}

Subject to:
\begin{equation}
\label{eq.a25.5}
Tr\left\{\bf{PP}^{\dag}\right\}\leq1   
\end{equation}

Then, we do an eigen value decomposition such that: $\bf{H}^{\dag}\bf{H}=\bf{U}\Omega\bf{U}^{\dag}$. Let $\tilde{\bf{P}}=\bf{U}^{\dag}\bf{P}$, and let $\bf{Z}\succeq 0$ is a positive semi-definite matrix, such that: $\bf{Z}=\tilde{\bf{P}}{\tilde{\bf{P}}}^{\dag}$. Then, the optimization problem can be re-written as: $max~~Tr\left\{\bf{Z}\Omega \right\}$ subject to: $Tr\left\{\bf{Z}\right\}\leq1$, which leads to the solution, $\bf{Z}=\lambda^{-1}\Omega snr$. In the DL, this result proves that the optimal precoder in the low-snr performs mainly two operations, firstly, it aligns the transmit directions with the eigenvectors of each user sub-channel. Secondly, it performs power allocation over the user sub-channels; i.e., the main and the interference links. Moreover, in the UL, it can be easily shown that specializing the low-snr results to the Gaussian inputs case by deriving the Taylor expansion of~\eqref{eq.a11.5} as $snr \to 0$ will follow the one in \eqref{a23} for the general inputs. It follows that the optimal power allocation as $snr \to 0$, for any inputs regardless of their signaling will follow the one for the Gaussian inputs in~\eqref{eq.a12.5}. Consequently, the mutual information is insensitive to the distribution of the inputs signaling in the low-snr. To prove this claim, we substitute all channel states and powers into \eqref{eq.a24.5}, and the optimization problem will be as follows:
%\vspace{-0.2cm}
%%\renewcommand{\theequation}{5.38}
\begin{equation}
\label{eq.a32.5.new}
max \left\{{h_{11}}^{2}P_1+{h_{12}}^{2}P_2+{h_{21}}^{2}P_1+{h_{22}}^{2}P_2\right\}. snr	                      
\end{equation}
Subject to:  
\begin{equation}
P_1\leq Q_1	                      
\end{equation}
%\vspace{-0.2cm}
%%\renewcommand{\theequation}{5.40}
\begin{equation}
P_2\leq Q_2	                      
\end{equation}
It follows that the gradient of the mutual information in~\eqref{eq.a32.5.new} with respect to the input powers is only a function of the channel states and the snr, that is:
\begin{equation}
\label{eq.a35.5.new}
\nabla_{P_1}I({{x_{1},x_{2};y_{1},y_{2}}})=\left\{h_{11}^{2}+h_{12}^{2}\right\}snr=\lambda_1\sqrt{P_1}	            
\end{equation}
and,
%\vspace{-0.2cm}
%%\renewcommand{\theequation}{5.42}
\begin{equation}
\label{eq.a36.5.new}
\nabla_{P_2}I({{x_{1},x_{2};y_{1},y_{2}}})=\left\{h_{21}^{2}+h_{22}^{2}\right\}snr=\lambda_2\sqrt{P_2}                    
\end{equation}
Therefore, when $snr \to 0$, we can write the result~\eqref{eq.a35.5.new} and~\eqref{eq.a36.5.new} in a matrix formulation, see [Eq.48,~\cite{100}], as follows:
\begin{equation}
D_{{\textbf{P}}}I({{x_{1},x_{2};y_{1},y_{2}}})=vec \left({\bf{H}^{\dag}\bf{H}}\right)	                      
\end{equation}
And this proves our claim.

\subsubsection{The High-SNR Regime}
The characterization of the optimal precoder at the high-snr is known to be an NP-hard problem. In fact, the lack of explicit expressions for the capacity of binary input constellations makes the goal more difficult. In \cite{34}, they provide an explicit expression of the capacity of SISO Gaussian channels with BPSK inputs and the MMSE counter part. They verified the fundamental relation between the mutual information and the MMSE for the special case of Gaussian inputs to the BPSK signaling, therefore, its proved for general inputs. The MMSE and the mutual information for BPSK signaling over SISO AWGN channel, respectively, are obtained as:
\begin{equation}
\label{m1}
{mmse(snr)}=1-\int_{-\infty}^{\infty}\frac{e^{-({\zeta-\sqrt{snr}})^2}}{\sqrt{\pi}}tanh(2\sqrt{{snr}}{\zeta}) d{\zeta}
\end{equation}
\begin{equation}
\label{m2}
{I(snr)}={snr}-\int_{-\infty}^{\infty}\frac{e^{-({\zeta-\sqrt{snr}})^2}}{\sqrt{\pi}}log~cosh(2\sqrt{{snr}}{\zeta}) d{\zeta}
\end{equation}
The authors in~\cite{34} didn't provide a detailed proof of their result, therefore, we present the derivation in Appendix D. The $tanh$ term corresponds to the conditional mean estimate of the input or the non-linear estimate $\mathbb{E}[\bf{x|y}]$.  Moreover, we can also see how the mmse in~\eqref{m1} relates to the error function, or in other words, to the probability of bit error rate of BPSK inputs under AWGN channel as follows:
\begin{equation}
{mmse(snr)}=1-\int_{-\infty}^{\infty}\frac{e^{-({\zeta-\sqrt{snr}})^2}}{\sqrt{\pi}}tanh(2\sqrt{{snr}}{\zeta}) d{\zeta}
\end{equation}
%
%\vspace{-0.5cm}
%%\renewcommand{\theequation}{5.47}
\begin{equation}
\geq 1-\int_{0}^{\infty}\frac{e^{-({\zeta-\sqrt{snr}})^2}}{\sqrt{\pi}}d{\zeta}
\end{equation}
%\vspace{-0.5cm}
%%\renewcommand{\theequation}{5.48}
\begin{equation}
=1-\int_{\sqrt{snr}}^{\infty}\frac{e^{-({\zeta})^2}}{\sqrt{\pi}}d{\zeta}=\frac{1}{2}erfc\left(\sqrt{snr}\right)
\end{equation}
In addition, its worth to observe the geometric properties of the solution in~\eqref{m2}. The cosh hyperbolic function defines the decision regions of the constellation points detection multiplied by a two sided error function, through which the mutual information reaches a saturation limit of $log_{2}(2)=1$; a normalized snr when the second term goes to zero, which makes any binary constellation matches the Gaussian one in terms of mutual information at the low-snr, see \cite{SamahMaptele}. Nonetheless, \eqref{m1} and \eqref{m2} can be multiplied by 2 to gain the closed-form expressions of QPSK inputs since the decision regions of the hyperbolic function extends over the real and imaginary parts with constellation points ${\bf{x}}_{QPSK}=\{1+1j,1-1j,-1-1j,-1+1j\}$ instead of ${\bf{x}}_{BPSK}=\{1,-1\}$. This geometric interpretation of the solution may help advancing future research to find closed form expressions for the mutual information of other types of binary constellations as well as multi-user setups. Moreover, the mutual information for SISO Gaussian channels with BPSK input distribution is expanded for high-snr and upper- and lower-bounded in terms of the minimum transmit lattice distance ${d_{min}}$ and maximum receive lattice distance ${d_{max}}$ between the constellation points, see~[Theorem 4,~\cite{3}]. Therefore, capitalizing on the result in \cite{3}, we can derive the structure of the optimal precoder for each user terminal in the MCP setup. We can define the optimization problem using the upper bound as follows:
\begin{equation}
\label{eq.a37.5}
\max \ {log~M-\frac{e^{-{d_{min}}^{2} \frac{snr}{4}}}{Md_{min} snr}\big(\sqrt{\pi}-\frac{4.37+2\sqrt{\pi}}{{d_{min}}^{2} snr}\big)}
\end{equation}

Subject to:
\begin{equation}
\label{a38}
Tr \left\{\bf{P}\bf{P}^{\dag}\right\}=1	                      
\end{equation}
%\normalsize

With ${M}$ is the product of the constellation cardinality and ${d_{min}}$ is the the minimum distance between the ${M}$ possible realizations of the input vector of the constellation, therefore, its defined as:
\begin{equation} 
\label{eq.aa38.5}
{d_{min}}=min_{i\neq j}\bf{\left\|HP\left(x_i-x_j\right)\right\|}
\end{equation}

However, due to \eqref{eq.aa38.5} and due to the fact that there is no explicit form for the optimal precoder in the high-snr regime, and by virtue of [Eq.16,~\cite{97}], which has been identified as an NP-hard problem, we define the initial value in the numerical solution of ${d_{min}}$ as follows:
\begin{equation}
\label{eq.a39.5}
{d_{min}}= \bf{{(HP)}}^{\dag}\bf{H}\bf{P}                      
\end{equation}

\begin{theorem}
\label{theorem6.5}
The optimal precoder matrix in the high-snr for a BPSK constellation in the MCP setup is the solution of:
\begin{multline}
\label{eq.a41.5}
D_{\bf{P}} I(x_{1},x_{2};y_{1},y_{2})=\\
-a\big((\bf{G}^{-1}\bf{VW})^{T}\otimes(\bf{G}^{-1} \bf{P}^{\dag} \bf{C})\big)
+a\big((\bf{G}\bf{W})^{T}\otimes(\bf{U}\bf{V}\bf{P}^{\dag}C)\big)\\
+a\big(\bf{W}^{T}(\bf{U}\bf{V}\bf{G}\bf{P}^{\dag}\bf{C})\big)
+ab\big((\bf{G}^{-2})^{T}\otimes(\bf{U}\bf{V}\bf{G}^{-1}P^{\dag}\bf{C})\big)\\
+ab\big((\bf{G}^{-1})^{T}\otimes(\bf{U}\bf{V}\bf{G}^{-1} P^{\dag}\bf{C})\big)=0                   
\end{multline}

with $a=\frac{\sqrt{pi}}{2M}$ , $b=\frac{4.37+2\sqrt{pi}}{4\sqrt{pi}}$, $\bf{C}=\frac{snr}{4}\bf{H}^{\dag}\bf{H}$, $\bf{G}=\bf{P}^{\dag}\bf{C}\bf{P}$, $\bf{U}=\bf{G}^{-1}$, $\bf{V}=e^{-G^{2}}$, and ${\bf{W}}={\bf{I}}-b\bf{G}^{-2}$.
\end{theorem}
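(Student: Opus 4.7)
The plan is to form the Lagrangian of the maximization in~\eqref{eq.a37.5}--\eqref{a38}, substitute the initialization~\eqref{eq.a39.5} so that every occurrence of $d_{min}$ becomes the matrix $(\bf{H}\bf{P})^{\dagger}\bf{H}\bf{P}$, and then differentiate through the auxiliary matrix $\bf{G}=\bf{P}^{\dagger}\bf{C}\bf{P}$ via the chain rule. Setting the resulting stationarity condition to zero should produce~\eqref{eq.a41.5}.

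First I would drop the constant $\log M$ and package the objective as
\begin{equation*}
f(\bf{P}) \;=\; -\frac{\sqrt{\pi}}{M\,snr}\,\bf{V}\,\bf{U}\,\bf{W},
\end{equation*}
so that the three factors that get differentiated are exactly the auxiliary matrices $\bf{V}=e^{-\bf{G}^{2}}$, $\bf{U}=\bf{G}^{-1}$, and $\bf{W}=\bf{I}-b\bf{G}^{-2}$ introduced in the theorem statement. The product rule then splits $df$ into $d\bf{V}\cdot\bf{U}\bf{W}+\bf{V}\,d\bf{U}\,\bf{W}+\bf{V}\bf{U}\,d\bf{W}$, each of which must be expressed in terms of $d\bf{P}$ through $d\bf{G}=d\bf{P}^{\dagger}\bf{C}\bf{P}+\bf{P}^{\dagger}\bf{C}\,d\bf{P}$ together with $d(\bf{G}^{-1})=-\bf{G}^{-1}(d\bf{G})\bf{G}^{-1}$ and $d(\bf{G}^{-2})=-\bf{G}^{-2}(d\bf{G})\bf{G}^{-1}-\bf{G}^{-1}(d\bf{G})\bf{G}^{-2}$.

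Next I would vectorize every contribution using $\operatorname{vec}(\bf{A}\bf{B}\bf{C})=(\bf{C}^{T}\otimes\bf{A})\operatorname{vec}(\bf{B})$, which is precisely the identity that generates the Kronecker structure in~\eqref{eq.a41.5}. The five summands then align with the following branches: the first comes from combining $d\bf{V}$ with $d(\bf{G}^{-1})$ along the $\bf{P}^{\dagger}\bf{C}\,d\bf{P}$ half of $d\bf{G}$; the second and third come from the two halves of $d\bf{G}$ acted on by $\bf{V}\bf{U}$ with $\bf{W}$ on the opposite side; and the last two, both carrying $b$, come from the two halves of $d(\bf{G}^{-2})$ inside $d\bf{W}$. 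The scalars $a=\sqrt{\pi}/(2M)$ and $b=(4.37+2\sqrt{\pi})/(4\sqrt{\pi})$ are pure bookkeeping of the numerical prefactors picked up along the way, and the trace constraint~\eqref{a38} is imposed separately on any solution of the resulting stationarity equation.

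The hardest step will be the Fr\'echet derivative of $\bf{V}=e^{-\bf{G}^{2}}$: since $d\bf{G}$ and $\bf{G}$ do not commute in general, the exact differential is the Daleckii--Krein double integral $-\int_{0}^{1}e^{-s\bf{G}^{2}}\,d(\bf{G}^{2})\,e^{-(1-s)\bf{G}^{2}}\,ds$, which does not a priori collapse to $-\bf{V}\,d(\bf{G}^{2})$. The reduction I would need to justify carefully rests on two facts: (i) $\bf{G}=\bf{P}^{\dagger}\bf{C}\bf{P}$ is Hermitian because $\bf{C}=(snr/4)\bf{H}^{\dagger}\bf{H}$ is, so every analytic function of $\bf{G}$ commutes with $\bf{G}$ itself; and (ii) only the pieces linear in $d\bf{G}$ survive at first order, so after vectorization the non-commuting contributions recombine into the single Kronecker block $(\bf{G}\bf{W})^{T}\otimes(\bf{U}\bf{V}\bf{P}^{\dagger}\bf{C})$ displayed in~\eqref{eq.a41.5}. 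Once that reduction is carried out and the five contributions are collected, equating the total derivative to zero yields~\eqref{eq.a41.5}.
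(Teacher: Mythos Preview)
Your proposal is correct and follows essentially the same route as the paper: the paper's own proof is a one-line sketch that says to substitute~\eqref{eq.a39.5} into~\eqref{eq.a37.5} and apply the matrix differentiation rules of~\cite{101}, with all details suppressed. Your plan does exactly this---chain rule through $\mathbf{G}$, product rule on $\mathbf{U}\mathbf{V}\mathbf{W}$, and the vectorization identity $\operatorname{vec}(\mathbf{A}\mathbf{B}\mathbf{C})=(\mathbf{C}^{T}\otimes\mathbf{A})\operatorname{vec}(\mathbf{B})$ to produce the Kronecker blocks---so you are simply filling in what the paper omits, and your care about the Fr\'echet derivative of $e^{-\mathbf{G}^{2}}$ goes beyond what the paper records.
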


\begin{proof}
Substitute~\eqref{eq.a39.5} into ~\eqref{eq.a37.5} and capitalizing on the matrix differentiation theories in~\cite{101}, the theorem can be proved\footnote{Details of proof is removed to reduce redundancy.}.
\end{proof}

\vspace{0.3cm}

Solving~\eqref{eq.a41.5} numerically, we can see that the optimal precoding matrix is always a non-diagonal matrix.

\section{MCP Distributed Algorithms}
Comparing the cooperative framework using MCP - which models a MIMO channel - to the non-cooperative framework - which models an interference channel - we can analytically understand the benefits and drawbacks of each framework. In particular, the achievable rates via the cooperation is higher than that without cooperation. However, the processing overload and the CSI and data exchange overhead is another tradeoff. If the interference is orthogonal to the main channel, then we can preclude the interference effect, which can be dealt with through receive antenna diversity, and therefore we can maximize the information rates in the UL. However, we can attack the interference problem in the DL via adding a studied interference, i.e., via aligning the interference, or via precoding; as proposed in Theorem~\ref{theorem3.5}. We will introduce the MCP distributed algorithms, the first algorithm gives the optimal power allocation for the UL, and the second algorithm gives the optimal precoding for the DL.
\tiny
\begin{algorithm}[h!]
		\SetKwInOut{Input}{Input}\SetKwInOut{Output}{Output}
			\DontPrintSemicolon
            \BlankLine			
           %\BlankLine
           %\vspace{-0.35cm}
%\normalsize           
          \small  
          $\bf{BS1}$ {\bf{Input:}} $CSI1$, $\mathbb{E}(x_1|y_1)$, $\mathbb{E}(x_2|y_1)$     
          \BlankLine        
          $\bf{BS2}$ {\bf{Input:}} $CSI2$, $\mathbb{E}(x_1|y_2)$, $\mathbb{E}(x_2|y_2)$
            \BlankLine			
            \BlankLine				
		\uIf{$\bf{BW}$  $\bf{Backhaul}$ $\geq Threshold~{\tau} $}{
		\BlankLine
		%\small
		$\bf{BS1}$ and $\bf{BS2}$ declare congestion and minimal cooperation message}
		\uElse{
			$\bf{BS1}$ sends decoded $x_1: \ \mathbb{E}(x_1|y_1)$ and $CSI1$ to $\bf{BS2}$
			\BlankLine
      $\bf{BS2}$ sends decoded $x_2: \ \mathbb{E}(x_2|y_2)$ and $CSI2$ to $\bf{BS1}$
      \BlankLine
      $\bf{BS1}$ and $\bf{BS2}$ check resources$\to$ handshaking$\to$ $BS1/BS2$ will do the processing.}      

      {\bf{Output:}}{~~The optimum power allocation in the UL is the solution for:
        \BlankLine
        ~~~~~~~~~~~~~~~~~~~~~~~$\bf{P}_{k+1}=\alpha_{k} \bf{P}_{k}+ \alpha_{k} \lambda \bf{H}^{\dag} \bf{H}\bf{P}_{k} E_{k}$
         \BlankLine
        For the two UT case,
         \vspace{-0.3cm}      
\begin{center}
${P_1}^{\star}={\bf{P}}(1,1)$
        \BlankLine
        ${P_2}^{\star}={\bf{P}}(2,2)$
\end{center}
        %\vspace{-.02cm}
       $\bf{BS1}$ and $\bf{BS2}$ share  ${P_1}^{\star}$ and ${P_2}^{\star}$ and feedback to $UT1$ and $UT2$. 
 }     						
\caption{\;
Optimum Power Allocation with MCP-Uplink\;
%%\vspace{.03cm}
Full cooperation: CSI and data sharing \cite{107}}
\label{algo_1}
\normalsize
\end{algorithm}
\normalsize
%\vspace{-0.7cm}
%%%%%%%%%%%%%%%%%%%%%%%%%%%%%%%%%%%%%%%%%%%%%%%%%%%%%%%%%%%% ALGORITHM-II
\tiny
\begin{algorithm}[h!]
			\SetKwInOut{Input}{Input}\SetKwInOut{Output}{Output}
			\DontPrintSemicolon
            \BlankLine			
           % \BlankLine
           %\vspace{-0.35cm}
\small      
         $\bf{BS1}$ {\bf{Input:}} $CSI1$, $x_1$, $x_2$
          \BlankLine
          $\bf{BS2}$ {\bf{Input:}} $CSI2$, $x_1$, $x_2$
            \BlankLine			
           % \BlankLine
			 		BS1 and BS2 perform SVD($\bf{H}$):~~$\bf{H}=\bf{U}_{H} \Lambda_{H} \bf{V}_{H}^{\dag}$
		 		%\BlankLine
        %  $\bf{H}=\bf{U}_{H} \Lambda_{H} \bf{V}_{H}^{T}$  
        % \BlankLine
\begin{center}
$BS1$ sends $(h_{21}\nu_{h_{11}}\sqrt{P_1}+ h_{22}\nu_{h_{21}}\sqrt{P_1})x_{1}$ to $BS2$
\BlankLine
$BS2$ sends $(h_{11}\nu_{h_{12}}\sqrt{P_2}+ h_{12}\nu_{h_{22}}\sqrt{P_2})x_{2}$ to $BS1$
\end{center}
   {\bf{Output:}}{~~The optimum precoding in the DL is done via each BS solving,
 \BlankLine        
\begin{center}
%$\bf{P}_{k}=\bf{V}_{H} \bf{P}_{k+1}$
${\bf{P}_{k}}={\bf{V}_{H}} diag(\sqrt{P_1},\sqrt{P_2})$
\BlankLine 
${\bf{P}_{k+1}}=\alpha_{k} {\bf{P}_{k}}+ \alpha_{k} \lambda \bf{H}^{\dag} \bf{H}\bf{P}_{k} E_{k}$
\end{center}
%For a two simultaneous transmissions from each base station,
%\BlankLine
$\bf{BS1~~transmits}:$
\BlankLine 
\begin{center}
$(h_{11}\nu_{h_{11}} \sqrt{P_1}+ h_{12}\nu_{h_{21}} \sqrt{P_1})x_{1}+(h_{11}\nu_{h_{12}} \sqrt{P_2}+h_{12}\nu_{h_{22}} \sqrt{P_2})x_{2}$  
\end{center}
%\BlankLine
$\bf{BS2~~transmits}:$
\BlankLine 
\begin{center}
$(h_{21}\nu_{h_{11}} \sqrt{P_1}+ h_{22}\nu_{h_{22}} \sqrt{P_1})x_{1}+(h_{21}\nu_{h_{12}} \sqrt{P_2}+h_{22}\nu_{h_{22}} \sqrt{P_2})x_{2}$
\end{center}
} 						
The process will be iteratively repeated for each simultaneous transmission of $BS1$ and $BS2$.
\caption{\;
Optimum Precoding with MCP-Downlink \;
%\vspace{.03cm}
Full cooperation: CSI and data sharing \cite{107}
}
\normalsize
\label{algo_2}
%\normalsize
\end{algorithm}
\normalsize

\section{Numerical Analysis}
We shall now introduce a set of illustrative results that cast insight into the problem. The results for the Gaussian inputs setup is straightforward with the mutual information closed form. However, we used Monte-Carlo method to generate the achievable rates for arbitrary inputs.  Its of particular relevance to notice that the Gaussian inputs distribution is optimal compared to the arbitrary inputs distribution from the rate achievability sense, as shown in Figure~\ref{fig:figure-4.5}, and Figure~\ref{fig:figure-5.5}. We can easily verify that for the same transmit power, higher achievable rates are possible with Gaussian inputs. Moreover, the arbitrary inputs may lie at a certain point at the null space of the channel causing a decay in the achievable rates. For the Gaussian inputs, the optimal power allocation chosen by each user is to use their own maximum power, as illustrated in Figure~\ref{fig:figure-6.5}, therefore, this serves to maximize the data rates in the UL and DL scenarios. However, for the case of arbitrary inputs, the optimality is to search for a set where both inputs don't lie in the null space of the channel - Voronoi region - therefore, they don't cancel each other. Hence, optimal power allocation is not a sufficient solution, therefore, we can improve the decay in the mutual information either by orthogonalizing the inputs, or via precoding them. In addition, Figure~\ref{fig:figure-8.5} illustrates the main ideas of interference and interference free channels with respect to the mutual information and the errors. Notice that the channel gains are chosen to be unity for main and interference links when interference is considered. This will assure that the channel will not amplify nor attenuate the transmitted signals. Therefore, we can see that the loss in the achievable rate is 0.5 bits, such that without interference the achievable rate is 2 bits, and with interference the achievable rate is 1.5 bits. However, the 0.5 bit loss is induced through $E_{12}$ and $E_{21}$ causing $E_{11}+E_{22}$ to saturate at 0.5 instead of zero.
\begin{figure}[ht!]
    \begin{center}
        \mbox{\includegraphics[height=2.2in,width=3in]{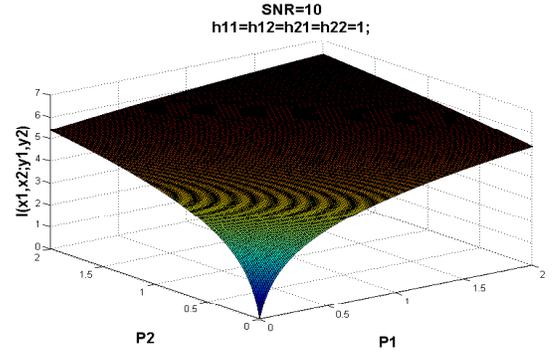}}     
        \caption{The achievable rate for the MCP with respect to UTs main power with Gaussian inputs.}
    \label{fig:figure-4.5}
    \end{center}
    \label{figure-4.5}
\end{figure}
\begin{figure}[ht!]
    \begin{center}
        \mbox{\includegraphics[height=2in,width=2.95in]{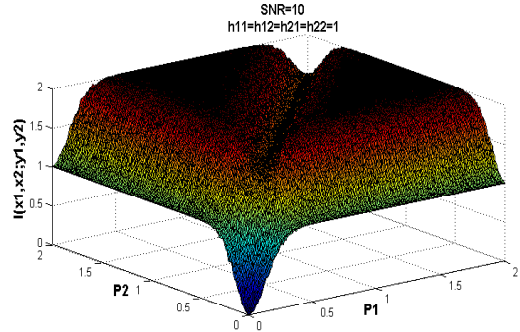}}               
    \caption{The achievable rate for the MCP with respect to UTs main power with BPSK inputs.}
    \label{fig:figure-5.5}
    \end{center}
    \label{figure-5.5}
\end{figure}
\begin{figure}[ht!]
    \begin{center}
        \mbox{\includegraphics[height=2in,width=2.8in]{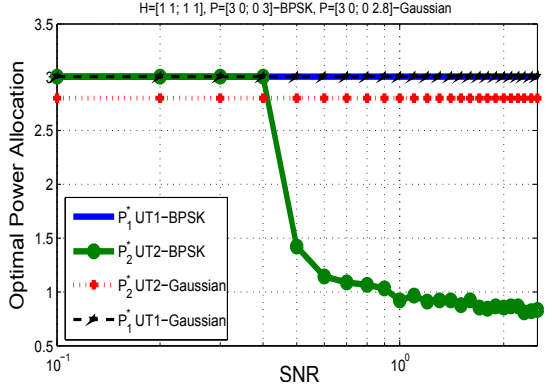}}
     \caption{Optimal power allocation for Gaussian inputs and BPSK inputs.}
    \label{fig:figure-6.5}
    \end{center}
\end{figure}
%%
%%\vspace{0.5cm}
%%
%%\renewcommand{\thefigure}{5.5}
\begin{figure}[ht!]
    \begin{center}
        \mbox{\includegraphics[height=2in,width=2.9in]{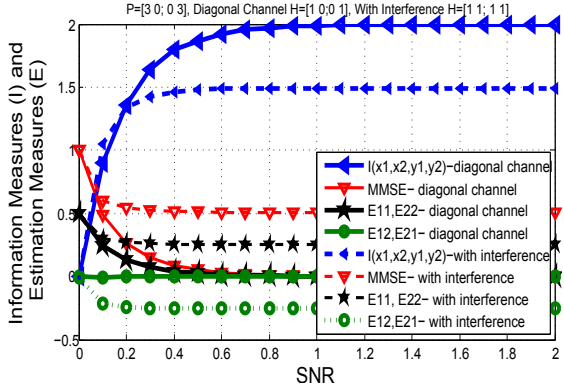}}  
				\caption{Information measures and estimation measures.}
    \label{fig:figure-8.5}
    \end{center}
\end{figure}
\begin{table}[ht]
\caption{Mutual information with and without interference}   % title of Table
\centering % used for centering table
\begin{tabular}{| c | c | c | c | c |}  % centered columns (5  columns)
 \hline\hline
  Signaling & MI without & MI with    & Losses & MIMO setup\\
            &  Int.(bits) & Int.(bits) & (bits) &\\
  \hline
  ${BPSK}$ & $2$  & $1.5$ & $0.5$ & $2\times2$\\
  \hline
  ${QPSK}$ & $4$  & $3$ & $1$ & $2\times2$\\
   \hline
  ${BPSK}$ & $3$  & $1.8$ & $1.2$ & $3\times3$\\
  \hline
  ${QPSK}$ & $6$  & $3.623$ & $2.377$ & $3\times3$\\
  \hline
	${BPSK}$ & $4$  & $2$ & $2$ & $4\times4$\\
  \hline
  ${QPSK}$ & $8$  & $4$ & $4$ & $4\times4$\\
  \hline
%\end{tabular*}
\end{tabular}
\label{table2}
\end{table}
%\vspace{-0.2cm}

Table 2 presents few quantified results for the mutual information with and without interference for different cooperation levels, i.e., for different MIMO setups. The achievable rates and losses are quantified via Monte-Carlo method, for higher constellations like $16-$QAM the number of permutations for: a 2~x~2 MIMO setup are 256, 3~x~3 MIMO setup are 4096, 4~x~4 MIMO setup are 65536; therefore, we limit the presentation to results that are computationally less demanding. Finally, it is worth to present a result that confirms the value of cooperation; i.e., the importance of the MCP network MIMO. 

Figure~\ref{fig:figure-9.5} not only illustrates the value of precoding in comparison to power allocation techniques, but it shows also that even for a diagonal channel without cooperation, i.e., without interference, a non-diagonal precdoing matrix $\bf{P^{\star}}$ is a rate maximizer and better than the mercury waterfilling power allocation $\bf{P_{TPC}}$ with a total power constraint alone and minimum distance of $\sqrt{6}$. In addition, the precoding which inherently includes power allocation and minimum distance maximization is also better than the power allocation with per user power constraint $\bf{P_{UTPC}}$ with a minimum distance of $\sqrt{8}$, for both the precoder and power allocation. 

The matrices used for comparison are as follows: $\bf{H}$=[$\sqrt{3}$ 0; 0 1], $\bf{P}_{TPC}$=[1/$\sqrt{2}$~~0;~0 ~~$\sqrt{3/2}$], $\bf{P}_{UTPC}$=[1 0; 0 1], and $\bf{P^{\star}}$=[1/$\sqrt{2}$~~1/$\sqrt{2}$;-1/$\sqrt{2}$~~1/$\sqrt{2}$]. Therefore, from a precoding perspective, the result illustrates a new look into interference through which a studied one can be a rate maximizer. Moreover, this casts insights that from a network level perspective, coding across packets could be of particular relevance to achieve the network capacity. Such framework introduces network coding over coefficients drawn from smaller sets; called GFs, or a special case of analog network coding. %NC is the focus of the last two Chapters of this PhD thesis, through which we addressed the delay problem with scenarios and technologies in practice.

\begin{figure}[ht!]
%\vspace{0.5cm}
    \begin{center}
        \mbox{\includegraphics[height=2in,width=2.9in]{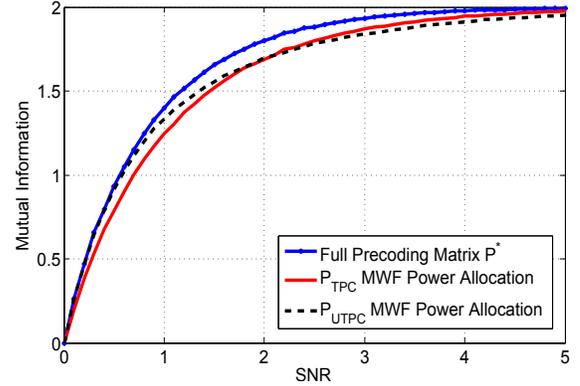}}  
        \caption{Mutual information for BPSK inputs with precoding and power allocation.}
    \label{fig:figure-9.5}
    \end{center}
%		\vspace{0.5cm}
\end{figure}

\section{Conclusion}
In this paper, we studied a cooperative framework where multi-cell processing is used between a cluster of two base stations. We derived the optimal power allocation and the optimal precoding structure which have been found to constitute the optimal setups for MIMO channels. We generalize a non-unique fixed point equation for the optimal precoding and power allocation in the two asymptotic regimes of the high- and low-snr. We provide an iterative approach for the design of the optimal precoding matrix for BPSK constellations at the high-snr. We build upon two distributed algorithms for the optimal solutions in the uplink and the downlink. It has been shown that the cooperation introduces a new look to interference through which a studied addition of interference can add positively to the spectral efficiency of the network. We have also highlighted the coupling between the information rates and the error rates through which the error - or particularly the covariance - caused by the interference links substitutes the drop in the information rates in the main links, this casts insights into having an interpretation of the interference with respect to the channel, transmitted power, and the error. In addition, this has explained why a non-studied interference is a capacity limiting factor in communication channels. Besides the implications of our designs, on defining fundamental limits of cooperation and providing new optimal designs that mitigate the effect of possible and known interferers. The impact of our studied framework extends to more generalized models that include in its structure more information about the system. In particular, the framework introduced in this paper casts insights into investigating the connections between information-theoretic measures and estimation-theoretic measures on a network level~\cite{samahWoWMoM16}, \cite{networkIMMSE}, \cite{piggybacking}, \cite{precodingNetworkLevel}, . Therefore, the system model can include the geometrical properties of the nodes in the network, and this structure exploits a network coding framework. In such framework, it is instrumental to revisit our derivations and design optimal setups that are adapted to the network in conjunction with the knowledge about the physical system. Moreover, of particular relevance are the implications of our derivations and optimal designs on other applications of measurement systems. For instance, systems that are not only interested in reconstructing the original data with lowest error rates, from an estimation perspective, but also aims to do a classification of the data into certain classes. More specifically, the optimal percoding matrix fixed point equation performs a pre-processing over the original data before it is contaminated by noise, and so, it acquires in its structure a maximization of the data rates or information obtained. Similarly, such structure provides studied projections that can be of importance to validate in reconstructing the signal from a compressive measurement. If the optimal precoder for arbitrary inputs distributions is a one which has a non-diagonal structure and a minimum distance maximizer, we could expect that the sparsity of compressed measurements can be designed with similar setups in order to be reconstructed correctly. Furthermore, our formulation of an iterative solution to solve the NP-hard problem can be used to solve similar problems, minimizing the search space into smaller dimensions, via smaller search spaces that are practically relevant to the physical systems under study.

%\newpage
\appendices
%\section{Appendix}
%\begin{center}
%\begin{minipage}[c]{\textwidth}
\section{Proof of Theorem 2} 
Theorem~\ref{theorem2.5} follows the KKT conditions solving~\eqref{eq.a8.5}, subject to~\eqref{eq.a9.5} the relation between the gradient of the mutual information and the MMSE. First, we define the Lagrangian of the optimization problem as follows:
\begin{equation}
\mathcal{L}(P_{1},P_{2},\lambda_1,\lambda_2)=-I(x_{1},x_{2};y_{1},y_{2})-\lambda_1(Q_1-P_1)-\lambda_2(Q_2-P_2)-\mu_1P_1-\mu_2P_2         
\end{equation}
The relation between the gradient of the mutual information with respect to the diagonal power allocation matrix ${\bf{P}}=diag\left({\sqrt{P_1}},{\sqrt{P_2}}\right)$ and the MMSE for linear vector Gaussian channels (MIMO) is given by:
\begin{equation}
\nabla_{P} I(x_{1},x_{2};y_{1},y_{2})=\bf{H}^{\dag}\bf{H}\bf{P}\bf{E}
\end{equation}

and,
\begin{equation}
\nabla_{P{P}^{\dag}} I(x_{1},x_{2};y_{1},y_{2})\bf{P}=\bf{H}^{\dag}\bf{H}\bf{P}\bf{E}
\end{equation}

Given that the inputs covariance and the noise covariance are identities. 
\vspace{0.1cm}
To define the conditions of the theorem, lets re-write the gradient of the Lagrangian:
\vspace{.05cm}
\begin{equation}
\nabla_{P_1}\mathcal{L}(P_{1},P_{2},\lambda_1,\lambda_2)=-\nabla_{P_1}I(x_{1},x_{2};y_{1},y_{2})+\lambda_1-\mu_1, 
\end{equation}
and,
\begin{equation}
\nabla_{P_2}\mathcal{L}(P_{1},P_{2},\lambda_1,\lambda_2)=-\nabla_{P_2}I(x_{1},x_{2};y_{1},y_{2})+\lambda_2-\mu_2,
\end{equation}

with primal feasibility condition, $\lambda_1(Q_1-P_1)=0$, $\mu_1P_1=0$, $\lambda_2(Q_2-P_2)=0$, and $\mu_2P_2=0$, and dual feasibility condition, $\lambda_1\geq0$, and $\lambda_2\geq0$. It follows that:
\begin{multline}
{\lambda_1}^{\star}\sqrt{P_1}=({h_{11}}^{*}h_{11}+{h_{21}}^{*}h_{21})\sqrt{P_1}E_{11}\\
+({h_{11}}^{*}h_{12}+{h_{21}}^{*}h_{22})\sqrt{P_2}E_{21}   
\end{multline}

\vspace{-0.6cm}
\begin{multline}
{\lambda_2}^{\star}\sqrt{P_2}=({h_{12}}^{*}h_{11}+{h_{22}}^{*}h_{21})\sqrt{P_1}E_{12}\\
+({h_{12}}^{*}h_{12}+{h_{22}}^{*}h_{22})\sqrt{P_2}E_{22}   
\end{multline}

{\textbf{Case 1:}} ${P_{1}=0}$, and ${P_{2}>0}$. It follows that:\\
$\mu_1\geq0$, and $\mu_2=0$, taking the gradient with respect to ${P_{1}}$ for the Lagrangian and applying the KKT conditions follows that: ${P_{2}=Q_{2}}$ when $\lambda_2\leq{(h_{12}^{*}h_{12}+ h_{22}^{*}h_{22})E_{22}}$.\\

{\textbf{Case 2:}} ${P_{1}>0}$, and ${P_{2}=0}$. It follows that:\\
$\mu_1=0$, and $\mu_2\geq0$, taking the gradient with respect to ${P_{2}}$ for the Lagrangian and applying the KKT conditions follows that: ${P_{1}=Q_{1}}$ when $\lambda_1\leq{(h_{11}^{*}h_{11}+ h_{21}^{*}h_{21})E_{11}}$.\\

{\textbf{Case 3:}} ${P_{1}>0}$, and ${P_{2}>0}$. It follows that: \\
$\mu_1=0$, and $\mu_2=0$, and the generalized power allocation for both UTs follows:
\begin{multline}
\sqrt{P_1}=\frac{1}{\lambda_1^{\star}}({h_{11}}^{*}h_{11}+{h_{21}}^{*}h_{21})\sqrt{P_1}E_{11}\\
+\frac{1}{\lambda_1^{\star}}({h_{11}}^{*}h_{12}+{h_{21}}^{*}h_{22})\sqrt{P_2}E_{21}   
\end{multline}

\vspace{-0.5cm}
\begin{multline}
\sqrt{P_1}=\frac{1}{\lambda_1^{\star}}({h_{11}}^{*}h_{11}\\
+{h_{21}}^{*}h_{21})\sqrt{P_1}\times \\
\mathbb{E}[(x_{1}-\mathbb{E}(x_{1}|y_{1},y_{2}))(x_{1}-\mathbb{E}(x_{1}|y_{1},y_{2}))^{\dag}]\\
+\frac{1}{\lambda_1^{\star}}({h_{11}}^{*}h_{12}+{h_{21}}^{*}h_{22})\sqrt{P_2}\times \\
\mathbb{E}[(x_{2}-\mathbb{E}(x_{2}|y_{1},y_{2}))(x_{1}-\mathbb{E}(x_{1}|y_{1},y_{2}))^{\dag}].
\end{multline}

\vspace{-0.5cm}
\begin{multline}
P_1^{\star}=\frac{1}{snr1|h_{11}^{*}h_{11}+ h_{21}^{*}h_{21}|}~mmse~(snr1|h_{11}^{*}h_{11}+ h_{21}^{*}h_{21}|P_1^{\star})\\
+\frac{1}{snr2|h_{11}^{*}h_{12}+ h_{21}^{*}h_{22}|}~cov~(snr2|h_{11}^{*}h_{12}+ h_{21}^{*}h_{22}|P_2^{\star}).
\end{multline}

and,
\begin{multline}
\sqrt{P_2}=\frac{1}{\lambda2^{*}}({h_{12}}^{*}h_{11}+{h_{22}}^{*}h_{21})\sqrt{P_1}E_{12}\\
+\frac{1}{\lambda_2^{\star}}({h_{12}}^{*}h_{12}+{h_{22}}^{*}h_{22})\sqrt{P_2}E_{22}
\end{multline}

\vspace{-0.5cm}
\begin{multline}
\sqrt{P_2}=\frac{1}{\lambda_2^{\star}}({h_{12}}^{*}h_{11}+{h_{22}}^{*}h_{21})\sqrt{P_1}\times\\
\mathbb{E}[(x_{1}-\mathbb{E}(x_{1}|y_{1},y_{2}))(x_{2}-\mathbb{E}(x_{2}|y_{1},y_{2}))^{\dag}]\\
+\frac{1}{\lambda_2^{\star}}({h_{12}}^{*}h_{12}+{h_{22}}^{*}h_{22})\sqrt{P_2}\times \\
\mathbb{E}[(x_{2}-\mathbb{E}(x_{2}|y_{1},y_{2}))(x_{2}-\mathbb{E}(x_{2}|y_{1},y_{2}))^{\dag}].   
\end{multline}

\vspace{-0.5cm}
\begin{multline}
P_2^{\star}=\frac{1}{snr2|h_{12}^{*}h_{12}+ h_{22}^{*}h_{22}|}~mmse~(snr2|h_{12}^{*}h_{12}+ h_{22}^{*}h_{22}|P_2^{\star})\\
+\frac{1}{snr1|h_{12}^{*}h_{11}+ h_{22}^{*}h_{21}|}~cov~(snr1|h_{12}^{*}h_{11}+ h_{22}^{*}h_{21}|P_1^{\star}).
\end{multline}

Therefore, Theorem~\ref{theorem2.5} has been proved.

%\end{center}

% you can choose not to have a title for an appendix
% if you want by leaving the argument blank
\section{Proof of Theorem 3}
We can show that the optimum precoding matrix for a MIMO setup satisfies the following fixed point equation:
\begin{equation}
\bf{P}^{\star}=\nu^{-1}\bf{H}^{\dag}\bf{H}\bf{P}^{\star}\bf{E}
\end{equation}
\vspace{-0.5cm}
\begin{equation}
~~=\nu^{-1}\bf{H}^{\dag}\bf{H}\bf{P}^{\star}\mathbb{E}[XX^{\dag}-\mathbb{E}[{X|Y}]\mathbb{E}[{X|Y}]^{\dag}]
\end{equation}
\begin{equation}
~~=\nu^{-1}\bf{H}^{\dag}\bf{H}\bf{P}^{\star}\mathbb{E}[C-\hat{C}]=\nu^{-1}\bf{H}^{\dag}\bf{H}\bf{P}^{\star}\mathbb{E}[U_{C}^{\dag}\Lambda\Pi U_{\hat{C}}],
\end{equation}

via Wiendlant-Hoffman theorem~\cite{102}. With $\nu=||\bf{H}\bf{H}^{\dag}\bf{P}^{\star}\bf{E}||$, ${\bf{X}}=[{x_1~x_2}]^{T}$, and ${\bf{Y}}=[{y_1~y_2}]^{T}$. Therefore, digging into the depth of equation~\eqref{a18}, we can do a singular value decomposition of the channel matrix  $\bf{H}=\bf{U}_{H}\Lambda_{H}\bf{V}_H^{\dag}$ and the MMSE matrix $\bf{E}=\bf{U}_{E}\Lambda_{E}\bf{V}_E^{\dag}$, such that the optimal precoder is: $\bf{P}=\bf{U}\bf{D}\bf{R}^{\dag}$, with $\bf{U}=V_H$ corresponds to the channel right singular vectors, ${\bf{D}}=diag\left(\sqrt{P_1},\sqrt{P_2}\right)$, is a power allocation matrix; i.e., corresponds to the mercury/waterfilling~\cite{43}. $\bf{R}=\Pi U_E$ contains in its structure the eigenvectors of the MMSE matrix which can be permuted and/or projected with $\Pi$ based on the correlation of the inputs and their estimates. Therefore, the rotation matrix insures firstly, allocation of power into the strongest channel singular vectors, and secondly, diagonalizes the MMSE matrix to insure un-correlating the error or in other words independence between inputs, see also \cite{98},\cite{3}, and \cite{92}. Note that each row vector of $\bf{P}^{\star}$ corresponds to the optimal precoding weight that each BS should assign to each transmission in the MCP framework. Therefore, Theorem~\ref{theorem3.5} has been proved.

\section{Proof of Theorem 4}
First we will find the low-snr expansion of the MMSE matrix $\bf{E}$ in~\eqref{a21} as ${snr}\rightarrow 0$, then we will prove \eqref{a22}. The low-snr expansion of the conditional probability exponent is as follows:
\begin{align}
&|{\bf{y}}-\sqrt{snr}{\bf{HP}x}|^2 \nonumber\\
&=\left({\bf{y}}-\sqrt{snr}{\bf{HPx}}\right)^{\dag}\left({\bf{y}}-\sqrt{snr}{\bf{HPx}}\right) \nonumber\\
&=|{\bf{y}}|^2-\sqrt{snr}\left({\bf{y}^{\dag}}\bf{HPx}+\left(\bf{y^{\dag}}\bf{HPx}\right)^{\dag}\right)+{snr}|\bf{HPx}|^2 \nonumber\\
&=|{\bf{y}}|^2-2{\sqrt{snr}}\mathcal{R}\left({\bf{y^{\dag}}}{\bf{HPx}}\right)+{snr}|\bf{HPx}|^2
\end{align}
Hence,
\begin{equation}
p_{{\bf{y|x}}}\left({\bf{y|x}}\right)=\frac{1}{\pi^{n_r}}\exp{\left(-|{\bf{y}}-\sqrt{snr}{\bf{HPx}}|^2\right)}
\end{equation}
\begin{equation}
~~=\frac{1}{\pi^{n_r}}\exp{\left(-|{\bf{y}}|^2\right)}\exp{\left(2{\sqrt{snr}}\mathcal{R}\left(\bf{y^{\dag}HPx}\right)-{snr}|\bf{HPx}|^2\right)}
\end{equation}
\vspace{0.1cm}
%However, due to:
%%\renewcommand{\theequation}{5.73}
%\begin{align}
%&\exp{\left(a\bf{\sqrt{snr}}-b\bf{snr}\right)} \nonumber\\
%&=\frac{\exp{\left(a\bf{\sqrt{snr}}\right)}}{\exp{\left(b\bf{snr}\right)}} \nonumber\\
%&=\frac{1+a\bf{\sqrt{snr}}+\mathcal{O}\left(\bf{snr}\right)}{1+\mathcal{O}\left(\bf{snr}\right)} \nonumber\\
%&=1+a\bf{\sqrt{snr}}+\mathcal{O}\left(\bf{snr}\right),
%\end{align}
The low-snr expansion of the conditional probability distribution of the Gaussian noise is defined as:
\small
\begin{equation}
\label{eq.4.5.1}
{p}_{y|x}\left({\bf{y|x}}\right)=\frac{1}{\pi^{n_r}}\exp{\left(-|{\bf{y}}|^2\right)}\left({1+2\sqrt{snr}}\mathcal{R}\left(\bf{y^{\dag} HPx}\right)+\mathcal{O}\left({snr}\right)\right).
\end{equation}
\normalsize
Therefore,
\begin{multline}
\mathbb{E}_y[\mathbb{E}_{x|y}[x|y]\left(\mathbb{E}_{x|y}[x|y]\right)^{\dag}]=\int_{y\in\mathbb{C}^{n_r}}\frac{1}{\pi^{n_r}}\exp{\left(-|y|^2\right)}\\
\quad\times\frac{snr\left(\bf{HP}\right)^{\dag} yy^{\dag} {\bf{HP}}+\mathcal{O}\left(snr^2\right)}{1+\mathcal{O}\left(snr\right)}dy\\
\end{multline}
It follows that:
\begin{equation}
\label{sectermmmse}
\mathbb{E}_y[\mathbb{E}_{x|y}[x|y]\left(\mathbb{E}_{x|y}[x|y]\right)^{\dag}]=\left({\bf{HP}}\right)^{\dag} {\bf{HP}}snr+\mathcal{O}\left({snr}^2\right)
\end{equation}

The first term of the MMSE matrix $\bf{E}$ is $\mathbb{E}[\bf{xx}^{\dag}]=I$. However, the second term of $\bf{E}$ in \eqref{sectermmmse} is derived by substitution of \eqref{eq.4.5.1} into the conditional mean estimator terms.  Consequently, the low-snr expansion of the MMSE matrix $\bf{E}$ is given as follows:
\begin{equation}
{\bf{E}}={\bf{I}}-({\bf{HP}})^{\dag}{\bf{HP}}.snr+\mathcal{O}(snr^{2}).
\end{equation}

Therefore, we can express the MMSE in terms of the $snr$ as follows:
\begin{align}
&MMSE\left(snr\right) \nonumber\\
&=Tr\left\{\bf{HPE}\left(\bf{HP}\right)^{\dag}\right\} \nonumber\\
&=Tr\left\{{\bf{HP}}\left({\bf{I}}-\left({\bf{HP}}\right)^{\dag} {\bf{HP}}snr+\mathcal{O}\left(snr^2\right)\right)\left({\bf{HP}}\right)^{\dag}\right\} \nonumber\\
&=Tr\left\{{\bf{HP}}\left({\bf{HP}}\right)^{\dag}\right\}-Tr\left\{\left({\bf{HP}}\left({\bf{HP}}\right)^{\dag}\right)^2\right \} snr+\mathcal{O}\left(snr^2\right)
\end{align}

Therefore, Theorem~\ref{theorem4.5} has been proved.

\section{The High-SNR Derivations for BPSK}
%\lipsum[1-2]
The non-linear MMSE matrix $\bf{E}$ is defined as:
\begin{multline}
\label{5.5.0}
\bf{E}=\bf{\mathbb{E}[(\bf{x}-\bf{\mathbb{E}[x|y]})(\bf{x}-\bf{\mathbb{E}[x|y]})^{\dag}]}\\
=\mathbb{E}[\bf{xx^{\dag}}]-\mathbb{E}\bf{\mathbb{E}[x|y]\mathbb{E}[x|y]},
\end{multline}

with,
\begin{equation}
{\bf{\mathbb{E}[x|y]}}=\frac{\sum_{\bf{x}}{\bf{x}}p_{y|x}({\bf{y|x}})p_{x}({\bf{x}})}{p_{y}({\bf{y}})}
\end{equation}

\begin{equation} 
~~= \frac{\sum_{{\bf{x}}}{\bf{x}}p_{y|x}({\bf{y|x}})p_{x}({\bf{x}})}{\sum_{{\bf{x'}}}p_{{y|x'}}({\bf{y|x'}})p_{x}({\bf{x'}})}
\end{equation}

%\vspace{-0.5cm}
For the BPSK inputs, the values of ${\bf{x}}=\{1,-1\}$. Therefore, the non-linear estimate with respect to all possible permutations of the possible inputs is as follows:
\begin{equation} 
{\bf{\mathbb{E}[x|y]}}= \frac{e^{-({\bf{y}}-\sqrt{snr})^{2}}-e^{-({\bf{y}}+\sqrt{snr})^{2}}}{e^{-({\bf{y}}-\sqrt{snr})^{2}}+e^{-({\bf{y}}+\sqrt{snr})^{2}}}
\end{equation}
%
%However,
%%\renewcommand{\theequation}{5.87}
%\begin{equation}
%\mathbb{E}\left[\bf{\mathbb{E}(x|y)}\bf{\mathbb{E}(x|y)^{\dag}}\right]=\int{\left(\frac{\sum_{\bf{x}}\bf{x}p_{y|x}(\bf{y|x})p_{x}(\bf{x})}{p_{y}(\bf{y})}\right)^{2}p_{y}(\bf{y}) \bf{dy}}
%\end{equation}
%
%\vspace{-0.5cm}
%%\renewcommand{\theequation}{5.88}
%\begin{equation}
%\mathbb{E}\left[\bf{\mathbb{E}(x|y)}\bf{\mathbb{E}(x|y)^{\dag}}\right]=\int{\frac{\left(\sum_{\bf{x}}\bf{x}p_{y|x}(\bf{y|x})p_{x}(\bf{x})\right)^{2}}{p_{y}(\bf{y})} \bf{dy}}
%\end{equation}
%
%\vspace{-0.5cm}
%%\renewcommand{\theequation}{5.89}
%\begin{equation}
%\mathbb{E}\left[\bf{\mathbb{E}(x|y)}\bf{\mathbb{E}(x|y)^{\dag}}\right]=\int{\frac{\sum_{\bf{x}}\bf{x}p_{y|x}(\bf{y|x})p_{x}(\bf{x})}{p_{y}(\bf{y})}\sum_{\bf{x}}\bf{x}p_{y|x}(\bf{y|x})p_{x}(\bf{x}) \bf{dy}}
%\end{equation}
%
%\vspace{-0.5cm}
%
%Therefore,
%%\renewcommand{\theequation}{5.90}
%\begin{equation}
%\label{5.5.1}
%\mathbb{E}\left[\bf{\mathbb{E}(x|y)}\bf{\mathbb{E}(x|y)^{\dag}}\right]=\\
%\frac{1}{2\pi}\int{\frac{e^{-(\bf{y-\sqrt{snr}})^{2}}-e^{-(\bf{y+\sqrt{snr}})^{2}}}{e^{-(\bf{y-\sqrt{snr}})^{2}}+e^{-(\bf{y+\sqrt{snr}})^{2}}}\left(e^{-(\bf{y-\sqrt{snr}})^{2}}-e^{-(\bf{y+\sqrt{snr}})^{2}}\right)\bf{dy}}
%\end{equation}
%
%Digging into the depth of the right hand side of equation~\eqref{5.5.1}, we have:
%\renewcommand{\theequation}{5.91}
%\begin{equation}
%(\bf{y-\sqrt{snr}})^{2}=\bf{y^2}-\bf{2\sqrt{snr}y}+\bf{snr}
%\end{equation}
%
%%\renewcommand{\theequation}{5.92}
%\begin{equation}
%(\bf{y+\sqrt{snr}})^{2}=\bf{y^2}++\bf{2\sqrt{snr}y}+\bf{snr}
%\end{equation}
%Thus,
%\renewcommand{\theequation}{5.93}
With,
\begin{equation}
\frac{e^{-({\bf{y}}-\sqrt{snr})^{2}}-e^{-({\bf{y}}+\sqrt{snr})^{2}}}{e^{-({\bf{y}}-\sqrt{snr})^{2}}+e^{-({\bf{y}}+\sqrt{snr})^{2}}}
=\frac{e^{2\sqrt{snr}{\bf{y}}}-e^{-2\sqrt{snr}{\bf{y}}}}{e^{2\sqrt{snr}{\bf{y}}}+e^{-2\sqrt{snr}{\bf{y}}}}
\end{equation}
\begin{equation}
~~=tanh\left(2\sqrt{{snr}}\mathcal{R}(\bf{y})\right)
\end{equation}
%It follows that:
%%\renewcommand{\theequation}{5.95}
%\begin{equation}
%\mathbb{E}\left[\bf{\mathbb{E}(x|y)}\bf{\mathbb{E}(x|y)^{\dag}}\right]=\\
%\frac{1}{2\pi}\int{tanh\left(2\sqrt{\bf{snr}}\mathcal{R}(\bf{y})\right)\left(e^{-(\bf{y-\sqrt{snr}})^{2}}-e^{-(\bf{y+\sqrt{snr}})^{2}}\right)\bf{dy}}
%\end{equation}
%
%Therefore,
%%\renewcommand{\theequation}{5.96}
%\begin{multline}
%\mathbb{E}\left[\bf{\mathbb{E}(x|y)}\bf{\mathbb{E}(x|y)^{\dag}}\right]=\frac{1}{2\pi}\int_{y\in \textsl{C}}{tanh\left(2\sqrt{\bf{snr}}\bf{\mathcal{R}(y)}\right)e^{-\left(\bf{y}-\bf{\sqrt{snr}}\right)^{2}}}\\
%-\frac{1}{2\pi}\int_{y\in \textsl{C}}{tanh\left(2\sqrt{\bf{snr}}\bf{\mathcal{R}(y)}\right)e^{-\left(\bf{y}+\bf{\sqrt{snr}}\right)^{2}}\bf{dy}}
%\end{multline}
%
%However, it is known that:
%\renewcommand{\theequation}{5.97}
%\begin{equation}
%\bf{tanh(-x)}=\bf{-tanh(x)}
%\end{equation}
Since $tanh(-x)=-tanh(x)$ and the expectation remains the same if $\bf{y}\sim$ $N$ $(\sqrt{snr},1)$ replaced by $\bf{y}\sim$ $N$ $(-\sqrt{snr},1)$, due to symmetry, we have:
\begin{equation}
\label{5.5.3}
\mathbb{E}\left[\mathbb{E}({\bf{x|y}}){\mathbb{E}({\bf{x|y}})^{\dag}}\right]=\frac{1}{\pi}\int_{y\in \textsl{C}}{tanh \left(2\sqrt{snr}{\bf{y}}\right)e^{-\left({\bf{y}}-\sqrt{snr}\right)^{2}}{d{\bf{y}}}}
\end{equation}
 %$\bf{\zeta=\sqrt{snr}-2y}$
Therefore, due to marginalization of the complex domain into the real domain, substituting $\bf{\zeta}$ into \eqref{5.5.3}, and $\mathbb{E}[\bf{xx^{\dag}}]=1$ into \eqref{5.5.0}, the mmse(snr) of a BPSK input over a SISO channel is given by:
\begin{equation}
\label{5.5.4}
{mmse(snr)}=1-\frac{1}{\sqrt{\pi}}\int_{\zeta \in \textsl{R}}{tanh\left(2\sqrt{{snr}}{\bf{\zeta}}\right)e^{-\left({\bf{\zeta}}-\sqrt{snr}\right)^{2}} d\bf{\zeta}}
\end{equation}
Due to the fundamental relation between mmse(snr) and the mutual information, we will integrate \eqref{5.5.4} with respect to the snr to get the close form expression of the mutual information $I(snr)$ of a BPSK input over a SISO channel, as follows:
\begin{equation}
{I(snr)}={snr}-\frac{1}{\sqrt{\pi}}\int_{\zeta \in \textsl{R}}log~cosh\left(2\sqrt{{snr}}{\zeta}\right)e^{-\left({\zeta-\sqrt{snr}}\right)^{2}} d{\zeta}
\end{equation}

Therefore, we proved the relation for BPSK inputs as in \eqref{m1}, and \eqref{m2}.

\section*{Acknowledgment}
The author would like to thank Alberto Gil Ramos for helping in some proofs.%, particularly in the expansions and matrix differentiation.
\ifCLASSOPTIONcaptionsoff
  \newpage
\fi

\bibliographystyle{IEEEtran}
\bibliography{IEEEabrv,mybibfile}

% Generated by IEEEtran.bst, version: 1.14 (2015/08/26)
\begin{thebibliography}{10}
\providecommand{\url}[1]{#1}
\csname url@samestyle\endcsname
\providecommand{\newblock}{\relax}
\providecommand{\bibinfo}[2]{#2}
\providecommand{\BIBentrySTDinterwordspacing}{\spaceskip=0pt\relax}
\providecommand{\BIBentryALTinterwordstretchfactor}{4}
\providecommand{\BIBentryALTinterwordspacing}{\spaceskip=\fontdimen2\font plus
\BIBentryALTinterwordstretchfactor\fontdimen3\font minus
  \fontdimen4\font\relax}
\providecommand{\BIBforeignlanguage}[2]{{%
\expandafter\ifx\csname l@#1\endcsname\relax
\typeout{** WARNING: IEEEtran.bst: No hyphenation pattern has been}%
\typeout{** loaded for the language `#1'. Using the pattern for}%
\typeout{** the default language instead.}%
\else
\language=\csname l@#1\endcsname
\fi
#2}}
\providecommand{\BIBdecl}{\relax}
\BIBdecl

\bibitem{96}
A.~Lozano, R.~J. Heath, and J.~Andrews, ``Fundamental limits of cooperation,''
  \emph{IEEE Transactions on Information Theory}, vol.~59, no.~99, 2013.

\bibitem{94}
D.~Gesbert, S.~Hanly, H.~Huang, S.~Shamai, O.~Simeone, and W.~Yu, ``Multi-cell
  mimo cooperative networks:a new look at interference,'' \emph{IEEE Journal on
  Selected Areas in Communications}, vol.~28, no.~9, Dec. 2010.

\bibitem{samahVTC13}
S.~A.~M. Ghanem, ``Optimal power allocation and optimal precoding with
  multi-cell processing,'' pp. 1--5, June 2013.

\bibitem{44}
I.~Telatar, ``Capacity of multi-antenna gaussian channels,'' \emph{Bell Labs
  Technical Memorandum}, June 1995.

\bibitem{95}
S.~Vishwanath, N.~Jindal, and A.~Goldsmith, ``Duality, achievable rates, and
  sum-rate capacity of gaussian mimo broadcast channels,'' in \emph{IEEE
  Transactions on Information Theory}, vol.~49, no.~10, Oct. 2003.

\bibitem{DBLP:journals/corr/Ghanem14a}
\BIBentryALTinterwordspacing
S.~A.~M. Ghanem, ``Multi-cell processing with limited cooperation: {A} novel
  framework to timely designs and reduced feedback with general inputs,''
  \emph{CoRR}, vol. abs/1411.0594, 2014. [Online]. Available:
  \url{http://arxiv.org/abs/1411.0594}
\BIBentrySTDinterwordspacing

\bibitem{samahWiMob16}
S.~A.~M. Ghanem and M.~Ara, ``Multi-cell processing with limited cooperation: A
  novel framework to timely designs and reduced feedback,'' pp. 1--7, Oct 2016.

\bibitem{35}
D.~P. Palomar and S.~Verdu, ``Gradient of mutual information in linear vector
  gaussian channels,'' \emph{IEEE Transactions on Information Theory}, vol. 52,
  No.1, pp.141-154, January 2006.

\bibitem{99}
G.~Kramer, ``Interference channels handout,'' \emph{IEEE Winter school of
  Information Theory, UPF and CTTC, Barcelona}, 2011.

\bibitem{103}
H.~Sato, ``An outer bound on the capacity region of broadcast channels,'' in
  \emph{IEEE Transactions on Information Theory}, vol. IT-24, no.~3, May 1978.

\bibitem{98}
S.~A.~M. Ghanem, ``{MAC} {G}aussian channels with arbitrary inputs: Optimal
  precoding and power allocation,'' in \emph{2012 International Conference on
  Wireless Communications and Signal Processing (WCSP)}, Oct 2012, pp. 1--6.

\bibitem{DBLP:journals/corr/Ghanem14}
\BIBentryALTinterwordspacing
------, ``Multiple access gaussian channels with arbitrary inputs: Optimal
  precoding and power allocation,'' \emph{CoRR}, vol. abs/1411.0446, 2014.
  [Online]. Available: \url{http://arxiv.org/abs/1411.0446}
\BIBentrySTDinterwordspacing

\bibitem{100}
M.~Payaro and D.~P. Palomar, ``A note on gradient of mutual information in
  linear vector gaussian channels,'' 2008.

\bibitem{34}
{D. Guo and S. Shamai and S. Verdu}, ``{Mutual information and minimum
  mean-square error in Gaussian channels},'' \emph{{IEEE Transactions on
  Information Theory}}, vol.~{51}, pp. {1261--1282}, {April} {2005}.

\bibitem{SamahMaptele}
S.~A.~M. Ghanem, ``Mutual information for generalized arbitrary binary input
  constellations,'' \emph{MAP-Tele Workshop}, pp. 1--6, 2010.

\bibitem{3}
F.~Perez-Cruz, M.~R.~D. Rodrigues, and S.~Verdu, ``Mimo gaussian channels with
  arbitrary inputs: optimal precoding and power allocation,'' \emph{IEEE
  Transactions on Information Theory}, vol.~56, no.~3, pp. 1070--1084, March
  2010.

\bibitem{97}
M.~Payaro and D.~P. Palomar, ``On optimal precoding in linear vector gaussian
  channels with arbitrary input distribution,'' in \emph{IEEE International
  Symposium on Infromation Theory (ISIT)}, June 2009, pp. 1085--1089.

\bibitem{101}
A.~Hjorungnes and D.~Gesbert, ``Complex-valued matrix differentiation:
  Techniques and key results,'' in \emph{IEEE Transactions on Signal
  Porcessing}, vol. 55, no. 6, pp. 2740 - 2746, May 2007.

\bibitem{107}
S.~A.~M. Ghanem, ``Optimal power allocation and optimal precoding with
  multi-cell processing,'' \emph{IEEE 77th Vehicular Technology Conference:
  VTC-Spring, Dresden, Germany}, June 2013.

\bibitem{samahWoWMoM16}
------, ``Network coding: Connections between information theory and estimation
  theory,'' \emph{2016 IEEE 17th International Symposium on A World of
  Wireless, Mobile and Multimedia Networks (WoWMoM)}, pp. 1--6, June 2016.

\bibitem{networkIMMSE}
------, ``{N}etwork {I}-{MMSE},'' \emph{in preparation}, 2018.

\bibitem{piggybacking}
------, ``{P}iggybacking codes for {N}etwork {C}oding: The {H}igh/{L}ow {SNR}
  {R}egime,'' \emph{IEEE Globecom}, {2017}.

\bibitem{precodingNetworkLevel}
------, ``{P}recoding on network level,'' \emph{in preparation}, 2018.

\bibitem{102}
J.~H. Wilkinson, ``Elementary proof of the wielandt-hoffman theorem and its
  generalization,'' \emph{Technical Report}, no. CS 150, January 1970.

\bibitem{43}
A.~Lozano, A.~Tulino, and S.~Verdu, ``Mercury/waterfilling: Optimum power
  allocation with arbitrary input constellations,'' \emph{International
  Symposium on Information Theory (ISIT)}, pp. 1773--1777, 2005.

\bibitem{92}
M.~Lamarca, ``Linear precoding for mutual information maximization in mimo
  systems,'' 2009.

\end{thebibliography}

\end{document}